\documentclass[journal]{IEEEtran}
\usepackage{cite}
\usepackage{amsmath,amssymb,amsthm,multirow,dsfont,cuted}
\usepackage{enumitem}
\usepackage{epsfig}
\usepackage{tikz,pgfplots}
\usepackage{algorithm}
\usepackage{algpseudocode}
\usepackage{diagbox}
\usepackage{extarrows}
\usepackage{graphics,graphicx}
\usepackage{subfigure}

\newtheorem{theorem}{Theorem}

\newtheorem{corollary}{Corollary}
\newtheorem{assumption}{Assumption}

\newcommand{\cE}{\mathcal{E}}

\newcommand{\cO}{\mathcal{O}}
\newcommand{\N}{\mathcal{N}}

\newcommand{\C}{\mathcal{C}}

\newcommand{\bA}{\boldsymbol{A}}

\newcommand{\bD}{\boldsymbol{D}}
\newcommand{\bE}{\boldsymbol{E}}

\newcommand{\bH}{\boldsymbol{H}}
\newcommand{\bI}{\boldsymbol{I}}

\newcommand{\bL}{\boldsymbol{L}}

\newcommand{\bP}{\boldsymbol{P}}

\newcommand{\bR}{\boldsymbol{R}}
\newcommand{\bS}{\boldsymbol{S}}

\newcommand{\bU}{\boldsymbol{U}}
\newcommand{\bV}{\boldsymbol{V}}
\newcommand{\bW}{\boldsymbol{W}}
\newcommand{\bX}{\boldsymbol{X}}
\newcommand{\bY}{\boldsymbol{Y}}
\newcommand{\bZ}{\boldsymbol{Z}}


\newcommand{\boldh}{\boldsymbol{h}}
\newcommand{\bp}{\boldsymbol{p}}
\newcommand{\br}{\boldsymbol{r}}

\newcommand{\bv}{\boldsymbol{v}}

\newcommand{\bw}{\boldsymbol{w}}
\newcommand{\bx}{\boldsymbol{x}}
\newcommand{\by}{\boldsymbol{y}}
\newcommand{\bz}{\boldsymbol{z}}

\newcommand{\bpsi}{\boldsymbol{\psi}}

\newcommand{\bSig}{\boldsymbol{\Sigma}}
\newcommand{\bsig}{\boldsymbol{\sigma}}

\def\One{\boldsymbol{1}}
\def\Zero{\boldsymbol{0}}

\newcommand{\cA}{\boldsymbol{\cal{A}}}
\newcommand{\cB}{\boldsymbol{\cal{B}}}

\newcommand{\cD}{\boldsymbol{\cal{D}}}
\newcommand{\cF}{\boldsymbol{\cal{F}}}
\newcommand{\cG}{\boldsymbol{\cal{G}}}

\newcommand{\cM}{\boldsymbol{\cal{M}}}

\newcommand{\cR}{\boldsymbol{\cal{R}}}
\newcommand{\cS}{\boldsymbol{\cal{S}}}


\newcommand{\tr}{\text{Tr}}
\newcommand{\vc}{\text{vec}}

\newcommand{\col}{\text{col}}

\newcommand{\diag}{\text{diag}}
\newcommand{\bdiag}{\text{bdiag}}

\newcommand{\expec}{\mathbb{E}}


\DeclareMathOperator*{\minimize}{minimize}

\hyphenation{op-tical net-works semi-conduc-tor}

\begin{document}
	
\title{Online Distributed Learning over Graphs with Multitask Graph-Filter Models}
	
\author{Fei~Hua, 
		Roula~Nassif,~\IEEEmembership{Member,~IEEE,}
		C{\'e}dric~Richard,~\IEEEmembership{Senior~Member,~IEEE,}
		Haiyan~Wang
		and~Ali~H.~Sayed,~\IEEEmembership{Fellow,~IEEE}
\thanks{The work of C. Richard was funded in part by the PIA program under its IDEX UCAJEDI project (ANR-15-IDEX-0001) and by ANR under grant ANR-19-CE48-0002. The work of F. Hua was supported in part by China Scholarship Council and NSFC grant 61771396. The work of H. Wang was partially supported by NSFC under grants 61571365, 61671386 and National Key R\&D Program of China (2016YFC1400200). The work of A.~H.~Sayed was supported  in part by NSF grant CCF-1524250. Part of the results in this paper were presented at EUSIPCO~\cite{hua2018preconditioned} and ASILOMAR~\cite{hua2018decentralized} conferences.}
\thanks{F. Hua  and H. Wang are with School of Marine Science and Technology, Northwestern Polytechnical University, Xi’an 710072, China. F. Hua is also with the Universit\'{e} C\^{o}te d'Azur, OCA, CNRS, 06108 Nice, France. H. Wang is also with the School of Electronic Information and Artificial Intelligence, Shaanxi University of Science and Technology, Xi’an 710021, China (emails: fei.hua@oca.eu; hywang@sust.edu.cn).}
\thanks{C.~Richard is  with the Universit\'{e} C\^{o}te d'Azur, OCA, CNRS,  06108 Nice, France (email: cedric.richard@unice.fr).}
\thanks{R.~Nassif and A.~H.~Sayed are  with the School of Engineering, Ecole Polytechnique F\'{e}d\'{e}rale de Lausanne, CH-1015 Lausanne, Switzerland (emails: roula.nassif@epfl.ch;   ali.sayed@epfl.ch).}
}

\markboth{IEEE TRANSACTIONS ON SIGNAL AND INFORMATION PROCESSING OVER NETWORKS,~Vol.~xx, No.~x, xx~2020}%
{Hua \MakeLowercase{\textit{et al.}}: Online Distributed Learning over Graphs with Multitask Graph-Filter Models}
	%
	
	
\maketitle
\begin{abstract}	
		In this work, we are interested in adaptive and distributed estimation of graph filters from streaming data. We formulate this problem as a consensus estimation problem over graphs, which can be addressed with diffusion LMS strategies. Most popular graph-shift operators such as those based on the graph Laplacian matrix, or the adjacency matrix, are not energy preserving. This may result in an ill-conditioned estimation problem, and reduce the convergence speed of the distributed algorithms. To address this issue and improve the transient performance, we introduce a preconditioned graph diffusion LMS algorithm. We also propose a computationally efficient version of this algorithm by approximating the Hessian matrix with local information. Performance analyses in the mean and mean-square sense are provided. Finally, we consider a more general problem where the filter coefficients to estimate may vary over the graph. To avoid a large estimation bias, we introduce an unsupervised clustering method for splitting the global estimation problem into local ones. Numerical results show the effectiveness of the proposed algorithms and validate the theoretical results. 
\end{abstract}
\begin{IEEEkeywords}
	Graph signal processing, graph filter, diffusion LMS, node-varying graph filter, clustering.
\end{IEEEkeywords}
	
	%
	\IEEEpeerreviewmaketitle
	
\section{Introduction}
\IEEEPARstart{D}{ata} generated by network-structured applications often exhibit non-Euclidean structures, which make traditional signal processing techniques inefficient to analyze them. In contrast, graph signal processing (GSP) provides useful tools to analyze and process signals on graphs. They represent them as samples at the vertices of a possibly weighted graph, and use algebraic and spectral properties of the graph to study the signals. These graph representations are useful in applications ranging from social and economic networks to smart grids~\cite{djuric2018cooperative,ortega2017graph,sandryhaila2014big,shuman2013emerging}. Recent results in the area include sampling~\cite{chen2015discrete,anis2016efficient,Tsitsvero2016Signals}, filtering~\cite{sandryhaila2013discrete}, and inference and learning~\cite{nassif2019Regularization,defferrard2016convolutional,gama2018convolutional,anis2019sampling}, to cite a few. 
	
In order to cope with graph signals, GSP relies on two  ingredients: the graph shift operator (GSO) on  one hand, which accounts for the topology of the graph, and the graph Fourier transform (GFT) on the other hand, which allows to represent graph signals in the graph frequency domain. Built upon the definition of the GFT, graph filters play a central role in processing graph signal spectra by selectively amplifying or attenuating frequency components. Various architectures of graph filters have been proposed in the literature, including finite impulse response (FIR)~\cite{shuman2013emerging,sandryhaila2013discrete} and infinite impulse response (IIR)~\cite{shi2015infinite,liu2019filter} filters. From a perspective of scalability, and considering energy constraints and band-limited communications that may be encountered in large networks of distributed nodes such as sensor networks, significant efforts have been made recently to derive distributed graph filters. These filtering procedures allow each node to exchange only  local information with its neighboring nodes~\cite{loukas2015distributed,isufi2017autoregressive,segarra2017optimal,coutino2018advances,shuman2018distributed}.
	
Much of the GSP literature has focused on static graph signals, that is, signals that need not evolve with time. However, a wide spectrum of network-structured problems requires adaptation to time-varying dynamics. Sensor networks, social networks, vehicular networks, communication networks, and power grids are some typical examples. Prior to the more recent  GSP literature, many earlier works on adaptive networks have addressed problems dealing with this challenge by developing processing strategies that are well-suited to data streaming into graphs; see, e.g.,~\cite{sayed2013diffusionb,sayed2014adaptive,sayed2013diffusion}. Several diffusion strategies have been introduced, and their performance studied in various situations, such as  diffusion LMS \cite{Lopes2008LMS}, RLS \cite{Cattivelli2008}, and APA~\cite{Li2009}. By referring to the problem of estimating  an optimal parameter vector at a node as a ``task'', and depending on  the relations between the  parameter vectors across the entire network, adaptive networks can be divided into single or multitask networks. In single-task networks, all nodes estimate the same parameter vector. Typical works include~\cite{sayed2013diffusionb,sayed2014adaptive,sayed2013diffusion}. With multitask networks, multiple but related parameter vectors are inferred simultaneously in a cooperative manner, so as to improve the estimation accuracy by using the similarities between tasks~\cite{chen2015diffusion,chen2014diffusion,nassif2014multitask,nassif2016proximal}.
	
\textit{Related works:} In this work, we are interested in online learning of linear graph models for representing streaming graph signals in a distributed manner. We focus on diffusion strategies because they are scalable, robust, and enable network adaptation and learning.
Recently, some research works have considered  time-varying graph signals. An adaptive graph signal reconstruction algorithm based on the LMS is proposed in~\cite{di2016adaptive} but it operates in a centralized manner.  In~\cite{mei2017signal}, the authors focus on estimating a network structure capturing the dependencies among time series in the form of a possibly directed, weighted adjacency matrix. A causal autoregressive process is introduced in the time series to capture the intuition that information travels over the network at some fixed speed. In~\cite{isufi2019forecasting}, vector autoregressive (VAR) and vector autoregressive moving average (VARMA) models are proposed  for predicting time-varying processes on graphs. Joint time-vertex stationarity is introduced for time-varying graph signals in~\cite{loukas2016stationary,loukas2016predicting}, and a joint time-vertex harmonic analysis for graph signals is proposed in~\cite{grassi2017time}. It is shown that joint stationarity facilitates estimation or recovery tasks when compared to purely time or graph methods.

All the aforementioned works focus on centralized solutions whereas distributed algorithms may be more appropriate within the context of big data applications.  In~\cite{isufi20162d,isufi2017filtering}, the behavior of some distributed graph filters on time-varying graph signals is studied. Considering graph signal sampling and reconstruction, several distributed algorithms have been proposed to track time-varying band-limited graph  signals,  e.g., LMS-based algorithms in~\cite{di2017distributed}, RLS-based methods in~\cite{DiLorenzo2018adaptive}, Kalman-based methods in~\cite{isufi2017observing}, and kernel-based algorithms in~\cite{romero2017kernel}. In~\cite{qiu2017time}, the authors are interested in time-varying graph signals with temporal smoothness prior. They devise  distributed gradient descent algorithms to reconstruct the signals. Most of these works assume that the graph signals are band-limited. Another limitation is that the graph Fourier decomposition (eigenvectors) is needed beforehand, which is impractical for large networks.


\textit{Contributions:} Recently, several works successfully applied adaptive algorithms to graph signals. In~\cite{di2016adaptive,di2017distributed,DiLorenzo2018adaptive} for instance, graph Fourier coefficients are learned from streaming graph signals under band-limited assumption to perform adaptive reconstruction and tracking of time-varying graph signals from partial observations. In this work, we are interested in online distributed learning of linear graph models without assumption of band-limited processes. We use graph filter models in the time-vertex domain where there is no need to decompose the graph shift operator.  The formulated optimization problem relies on minimizing a global cost consisting of the aggregate sum of individual costs at each vertex.  To address this problem, we blend concepts from adaptive networks~\cite{sayed2014adaptation} and graph signal processing to devise graph diffusion LMS strategies. Considering that most popular shift operators are not energy preserving and may result in a slow convergence speed, we introduce a preconditioned optimization strategy to improve the transient performance. As this may lead to an increased computational complexity, we further propose a computationally efficient algorithm. Explicit theoretical  performance analyses in the mean and mean-square-error sense are provided.  We also give alternative theoretical results that are tractable for large networks. Simulation results show the efficiency of the proposed algorithms and validate the theoretical models. Finally, we extend these node-invariant filter models to more flexible ones where each node in the graph seeks to estimate a local node-varying graph filter. 
This allows us to exploit more degrees of freedom in the filter coefficients to better model graph signals. We introduce an unsupervised clustering strategy to determine which nodes in the graph share the same graph filter and may collaborate to estimate its parameters. Numerical results on a real-word dataset illustrate the efficiency of the proposed methods.
	
The rest of the paper is organized  as follows. Section~\ref{sec:central} formulates the problem and provides the centralized solution. Section~\ref{sec:diffusion} introduces the distributed algorithms, namely, the graph diffusion LMS strategy and its preconditioned counterparts. Section~\ref{sec:analy} provides their theoretical analyses in the mean and the mean-square sense. A clustering method is devised to estimate local node-varying graph filters in Section~\ref{sec:multi}. Numerical results in Section~\ref{sec:simu} show the effectiveness of these algorithms and validate the theoretical models.
	
\medskip
	
\noindent\textbf{Notations:} We use normal font letters to denote scalars, boldface lowercase letters to denote column vectors, and boldface uppercase letters to denote matrices.  The  $m$-th entry of a vector $\bx$ is denoted by  $x_m$ or $[\bx]_m$ when needed, the~$(m,n)$-th entry of a matrix $\bX$ is denoted by $x_{mn}$ or $[\bX]_{m,n}$ when needed, the $k$-th row of a matrix $\bX$ is denoted by $[\bX]_{k,\bullet}$.  We use the symbol $\otimes$ to denote the Kronecker product and the symbol $\tr(\cdot)$ to denote the trace operator. The operator $\col\{\cdot\}$ stacks the column vector entries on top of each other. The symbol $\vc(\bX)$ refers to the  vectorization operator that stacks the columns of a matrix on top of each other. Operator $\bx = \diag(\bX)$ stores the diagonal entries of $\bX$ into vector $\bx$, and $\bX = \diag(\bx)$ is a diagonal matrix containing the vector $\bx$ along its main diagonal. The symbol $\bdiag\{\cdot\}$ forms a matrix from block arguments by placing each block immediately below and to the right of its predecessor. The symbol $\|\cdot \|_{b,\infty}$ denotes the block maximum norm of a matrix. The symbols $\One$ and $\bI$ are the vector of all ones and the identify matrix of appropriate size, respectively. The symbols $\rho(\cdot)$ and $\lambda_{\max}(\cdot)$ denote the spectral radius and the maximum eigenvalue of its matrix argument, respectively.
	
\section{Problem Formulation and Centralized Solution}\label{sec:central}
	
We consider an undirected, weighted and connected graph $\mathcal{G}=(\N, \cE, \bW)$ of $N$ nodes, where $\N=\{1, 2, \ldots, N\}$ is the set of nodes, and $\cE$ is the set of edges such that if node $k$ is connected to node $\ell$, then $(k,\ell)\in \cE$. We denote by $\N_k$  the neighborhood of node~$k$ including itself, namely, $\N_k=\{\ell: \ell=k \vee  (\ell,k)\in \cE  \}$. Matrix $\bW \in  \mathbb{R}^{N\times N}$ is the adjacency matrix  whose $(k,\ell)$-th entry $w_{k\ell}$  assigns  a weight to the relation between vertices $k$ and $\ell$. Since the graph is undirected,  $\bW$ is a symmetric matrix. The degree matrix $\bD \triangleq \diag( \bW \One)$  is a diagonal matrix whose $i$-th diagonal entry is the degree of node $i$ which is equal to the sum of all the weights of edges incident at node $i$. The combinatorial Laplacian matrix is defined as $\bL \triangleq \bD-\bW$ which is a real, symmetric, positive semi-definite matrix.   We further assume that the graph is endowed with a graph shift operator defined as an $N \times N$ shift matrix $\bS$ whose entry $s_{k\ell}$ can be non-zero only if $k=\ell$ or $(k,\ell)\in \cE$. Although the shift matrix can be any matrix that captures the graph topology for the problem at hand~\cite{ortega2017graph},  popular choices are  the  graph Laplacian matrix~\cite{shuman2013emerging}, the adjacency matrix~\cite{sandryhaila2013discrete}, and their normalized counterparts.  A graph signal  is  defined  as  $\bx=[x_1,\ldots,x_N ]^{\top} \in \mathbb{R}^N$ where $x_k$ is the signal sample associated with node $k$. Let $\bx(i)$ denote the graph signal at time $i$. Operation $\bS\bx$ is called graph  shift. It can be performed locally   at each node $k$ by linearly combining the samples $x_{\ell}$ from its neighboring nodes $\ell\in\N_k$.

\subsection{Graph filter and data model}
	
In this paper, we focus on linear shift-invariant FIR graph filters $\bH: {\mathbb{R}^{N\times N} \rightarrow \mathbb{R}^{N\times N}}$ {of order $M$}, which are polynomials of the graph-shift operator~\cite{sandryhaila2013discrete}: 
	\begin{equation}
	\bH \triangleq \sum_{m=0}^{M-1} h_{m}^o \bS^m, \label{eq:gf}
	\end{equation}
where $\boldh^o=\{h_{m}^o\}_{m=0}^{M-1}$ are the scalar filter  coefficients. With the definitions of graph signal and graph shift operator, one common filtering model assumes that the filtered graph signal $\by(i)$ is generated from the input graph signal $\bx(i)$ as follows~\cite{sandryhaila2013discrete,nassif2017graph}:  
	\begin{equation}
	\by(i)=\bH \bx(i)+\bv(i)= \sum_{m=0}^{M-1} h_{m}^o \bS^m\bx(i)+\bv(i), \label{eq:gfing}
	\end{equation}
where $\bv(i)=[v_1(i),\ldots,v_N(i) ]^{\top} \in \mathbb{R}^N$ denotes an i.i.d. zero-mean noise independent of any other signal and with covariance matrix $\bR_v=\diag\{\sigma_{v,k}^2 \}_{k=1}^{N}$. For each node $k$, the  filtered signal $y_k(i)$ can be computed by  linearly combining the input signals at nodes located in an $(M-1)$-hop neighborhood~\cite{shuman2013emerging}. This model however  assumes the instantaneous diffusion of information over the graph since $\bS^m \bx(i)$ translates $ \bx(i)$ without time delay. As this assumption may appear as a serious limitation, we consider the more general model embedding the temporal dimension as follows~\cite{isufi20162d,nassif2017distributed}:
	\begin{equation}
	\by(i)=\sum_{m=0}^{M-1} h_{m}^o \bS^m\bx(i-m)+\bv(i). \label{eq:gfingt}
	\end{equation}
Observe that the input signal $\bx(i)$ in~\eqref{eq:gfing} has been replaced by $\bx(i-m)$ in~\eqref{eq:gfingt}, i.e.,  the $m$-hop spatial shift $\bS^m$  is now carried out in $m$ time slots. This model implements an FIR filter in both  graph domain and temporal domain. By retaining the following shifted signals that form the $N\times M-1$ matrix:
	\begin{equation}
	\bX_{r}=\Big[\bx(i-1), \bS\bx(i-2), \ldots, \bS^{M-2}\bx(i-M+1)\Big], \label{eq:retaining signal}
	\end{equation} 
note that  only one shift is required at  time instant $i$ to produce the filtered signal $\by(i)$. This means that $y_k(i)$ can be computed  using  only  local  information available within the one-hop neighborhood of node $k$.  Let $\bZ(i)$ denote the $N\times M$ matrix given by:
	\begin{equation}
	\bZ(i) \triangleq \left[ \bx(i) ,\, \bS\bx(i-1),\ldots,\bS^{M-1}\bx(i-M+1) \right],
	\end{equation}
then model \eqref{eq:gfingt} can be written alternatively as:
	\begin{equation}
	\by(i)=\bZ(i)\boldh^o+\bv(i) \label{eq:gftleq}
	\end{equation}
From model \eqref{eq:gftleq}, sample $y_k(i)$ at node $k$ can be written as:
	\begin{equation}
	y_k(i)=  \bz_k^\top(i)\boldh^o+v_k(i), \label{eq:model}
	\end{equation}
where $\bz^\top_k(i)$ is the $k$-th row of $\bZ(i)$ given by:
	\begin{equation}
	\label{eq: z_k(i)}
	\bz_k(i)\triangleq \col \big\{ [\bx(i)]_k,\, [\bS\bx(i-1)]_k,\ldots,  [\bS^{M-1}\bx(i-M+1)]_k \big\}.
	\end{equation}
Observe  in~\eqref{eq:model} that each node shares the same filter coefficient vector $\boldh^o$. The corresponding graph filter~\eqref{eq:gf} is referred to as  \textit{node-invariant} graph filter.  A more flexible model was introduced in~\cite{segarra2017optimal}, and called a \textit{node-variant} graph filter. It allows  the filter coefficients to vary across nodes as follows:
\begin{equation}
\bH \triangleq \sum_{m=0}^{M-1}\diag (\boldh^{(m)}) \bS^m, \label{eq:nvgf}
\end{equation}
with $\boldh^{(m)} \in \mathbb{R}^N$. If $\boldh^{(m)}=h_m \One$ for all $m$, model \eqref{eq:nvgf} reduces to the node-invariant model \eqref{eq:gf}. Otherwise, each node applies different weights to the shifted signal $\bS^m \bx$. Then  $y_k(i)$ in \eqref{eq:model} can be re-written as:
\begin{equation}
y_k(i)=  \bz_k^\top(i)\boldh_k^o+v_k(i), \label{eq:datamodel}
\end{equation}
where  $\boldh_k^o \in \mathbb{R}^M$ is the filter coefficient vector at node $k$ collected into $\boldh^{(m)} $, i.e.,  $[\boldh_k^o]_m=[\boldh^{(m)}]_k$. In this work, we seek to estimate  $\boldh_k^o$  from the filtered graph signal $y_k(i)$ and  inputs  $\bz_k(i)$,  in a distributed, collaborative and adaptive manner. 
Distributed algorithms such as the diffusion LMS exist in the literature to address single-task and multitask inference problems with similar data models as \eqref{eq:model} or \eqref{eq:datamodel}. In this work, however, regressors $\bz_k(i)$ in \eqref{eq: z_k(i)} are raised from graph shifted signals. This paper aims to exploit the graph shift structure in the regression data and incorporate it into the formulation of the distributed algorithm -- see expression~\eqref{eq:DLMS} further ahead. In the sequel, first, we shall study the case where  the filter coefficients are common for all  nodes, i.e. $\boldh_k^o=\boldh^o,  \forall k\in \N$.  We  shall show how to estimate $\boldh^o$ from streaming data $\{\by(i), \bx(i)\}$ in a centralized way and then, in a distributed way. Next, we shall assume that there are clusters of nodes within the graph, and each node in the same cluster uses the same filter. This  model is called a \textit{hybrid node-varying} graph filter~\cite{gama2017convolutional}. We shall introduce an unsupervised clustering method to allow each node to identify which neighboring nodes it should collaborate with.

\subsection{Centralized solution}

Before introducing the distributed method, we first introduce the centralized solution.  Consider the data model~\eqref{eq:gftleq} and assume that $\{\by(i),\bx(i),\bv(i)\}$ are zero-mean jointly wide-sense stationary random processes. Estimating $\boldh^o$ from $\{\by(i), \bZ(i) \}$ can be performed by solving the following problem:
\begin{equation}
\boldh^o = \arg \min_{\boldh}J(\boldh),
\end{equation}
where $J(\boldh)$ denotes the mean-square-error criterion:
\begin{align}
J(\boldh)&=\expec \|\by(i)-\bZ(i) \boldh \|^2 \notag  \\
&= \expec\{\by^\top(i) \by(i) \} -2\boldh^\top\br_{Zy} + \boldh^\top \bR_Z \boldh, \label{eq:gcost}
\end{align}
and the $M \times M$ matrix $\bR_Z$ and the $M \times 1$ vector $\br_{Zy}$ are given by:
\begin{equation}
\bR_Z \triangleq \expec \{\bZ^\top(i)  \bZ(i) \} ,  \quad \br_{Zy} \triangleq \expec \{\bZ^\top(i)  \by(i)\}.\label{eq:RX}
\end{equation}
By setting the gradient vector of $J(\boldh)$  to zero, the optimal parameter vector $\boldh^o$ can be found by solving:
\begin{equation}
\label{eq: closed form solution}
\bR_Z \boldh^o=\br_{Zy}.
\end{equation}
It can be verified that the $(m,n)$-th entry of  $\bR_Z$ is given by:
\begin{equation}
[\bR_Z]_{m,n}=\tr\left((\bS^{m-1})^\top \bS^{n-1} \bR_x(m-n) \right)
\end{equation}
where $\bR_x(m)\triangleq \expec\{\bx(i)\bx^\top(i-m)\}$. The $m$-th entry of the vector $\br_{Zy}$ is given by:
\begin{equation}
[\br_{Zy}]_m=\tr\left( (\bS^{m-1})^\top \bR_{xy}(m) \right),
\end{equation}
with $\bR_{xy}(m)\triangleq \expec\{\by(i) \bx^\top(i-m) \}$ denoting cross correlation function, which is assumed independent of time $i$.	

Instead of solving~\eqref{eq: closed form solution}, $\boldh^o$ can be sought iteratively by using the gradient-descent method:
\begin{equation}
\boldh(i+1)=\boldh(i) +\mu\big[\br_{Zy} -\bR_Z\boldh(i)\big],
\end{equation}
with $\mu>0$ a small step-size. Since the statistical moments are usually unavailable beforehand, one  way is to replace them by the instantaneous approximations $\bR_Z\approx \bZ^\top(i)\bZ(i)$ and $\br_{Zy} \approx \bZ^\top(i)  \by(i)$. This yields the LMS graph filter:
\begin{equation}
\boldh(i+1)=\boldh(i) +\mu\bZ^\top(i) \big[\by(i) -\bZ(i)\boldh(i)\big]. \label{eq:centLMS}
\end{equation}
This stochastic-gradient algorithm  is referred to as the \textit{centralized graph-LMS} algorithm. In this centralized setting, each node $k$ at each time instant $i$ sends its data $\{x_k(i),y_k(i)\}$ to a fusion center which will update $\boldh(i)$ according to \eqref{eq:centLMS}. Note that the step-size $\mu$ in \eqref{eq:centLMS}  must satisfy $0<\mu<\frac{2}{\lambda_{\max}({\bR_Z})}$  in order to guarantee stability in the mean under certain independence conditions on the data~\cite{sayed2008adaptive}.

\section{Diffusion LMS Strategies over Graph Signals}\label{sec:diffusion}
In this section, we seek  to estimate the graph filter coefficients in a distributed  fashion.  First, we review the graph diffusion LMS strategy~\cite{nassif2017distributed}. Then, a preconditioned algorithm is proposed to improve the transient performance. We also devise a computationally efficient counterpart of this algorithm.
	\subsection{Graph diffusion LMS}
	
Consider the local data model~\eqref{eq:model} at node $k$. It is worth noting that, by retaining the past shifted signals $\{[\bS^{m-1}\bx(i-m)]_{\ell}:m=1,\ldots,M-1\}$ {at each node $\ell$ in the network} from previous iterations, $\bz_k(i) $ can be computed locally at node $k$ from its one-hop neighbors at each iteration $i$. Let $\bR_{z,k}\triangleq\expec\{\bz_k(i)\bz^\top_k(i) \}$ denote the $M \times M$ covariance matrix with   $(m,n)$-th entry given by~\cite{nassif2017distributed}:
	\begin{equation}
	\label{eq: local covariance matrix}
	[\bR_{z,k}]_{m,n}=\tr\left([\bS^{m-1})]_{k,\bullet}^\top [\bS^{n-1}]_{k,\bullet} \bR_x(m-n) \right).
	\end{equation}
	
Considering the local cost $J_k(\boldh)$ at node $k$:
	\begin{equation}
	J_k(\boldh)=\expec |y_k(i)-\bz_k^\top(i) \boldh |^2, \label{eq:lcost}
	\end{equation}
the global cost \eqref{eq:gcost}   is now the aggregation of the local costs over the graph:
	\begin{equation}
	J(\boldh)=\sum_{k=1}^{N}	J_k(\boldh). \label{eq:cost}
	\end{equation}
In order to minimize~\eqref{eq:gcost} in a decentralized fashion, there are several useful techniques, e.g., incremental strategy~\cite{bertsekas1997new}, consensus strategy~\cite{xiao2005scheme} and diffusion strategy\cite{sayed2013diffusion}. Diffusion strategies are attractive since they are scalable, robust, and enable continuous learning and adaptation. In particular, the adapt-then-combine (ATC) diffusion LMS takes the following form at node $k$~\cite{nassif2017distributed}:
\begin{subequations}\label{eq:DLMS} 
		\begin{align}
		\bz_k^\top(i)&=\Big[ x_k(i), \sum_{\ell \in \N_k} s_{k\ell}~[\bz_\ell(i-1)]_{1},\ldots, \notag \\
		&\quad  \quad\quad \sum_{\ell \in \N_k} s_{k\ell}~[\bz_\ell(i-1)]_{M-1} \Big], \label{eq:DLMSz}\\ 
		\bpsi_k(i+1)&=\boldh_k(i)+\mu_k\bz_k(i)\big[y_k(i)-\bz_k^\top(i)\boldh_k(i)\big], \label{eq:DLMSa}\\
		\boldh_k(i+1)&=\sum_{\ell \in \N_k}a_{\ell k} \bpsi_\ell(i+1), \label{eq:DLMSb} 
		\end{align}
	\end{subequations}
where $\mu_k >0$ is a local step-size parameter and $\{a_{\ell k}\}$ are non-negative combination coefficients chosen to satisfy:
	\begin{equation}
	a_{\ell k} >0, \quad \sum_{\ell=1}^N a_{\ell k} =1, \, 
	\text{ and } \, a_{\ell k} =0 \, \text{ if }\,  \ell \notin \N_k. \label{eq:combcoeff}
	\end{equation}
	This implies that the matrix $\bA$ with $(\ell,k)$-th entry $a_{\ell k}$ is a left-stochastic matrix, which means that the sum of each of its columns is equal to $1$. In the first step~\eqref{eq:DLMSz},  each node $k$ uses the first $M-1$ entries of $\bz_{\ell}(i-1)$ from its one-hop neighbors and its own input sample $x_k(i)$ to compute $\bz_{k}(i)$. Note that the first $M-1$ entries of $\bz_{k}(i)$ then need to be retained for the next iteration. In the adaptation step \eqref{eq:DLMSa}, each node $k$ updates its local estimate~$\boldh_k(i)$ to an intermediate estimate $\bpsi_k(i+1)$. In the combination step \eqref{eq:DLMSb},  node $k$ aggregates all the intermediate estimates $\bpsi_\ell(i+1)$ from its neighbors to {obtain} the updated estimate $\boldh_k(i+1)$.
	\subsection{Graph diffusion preconditioned LMS}
	
	The regressor  $\bz_k(i)$ used in the adaptation step~\eqref{eq:DLMSa} results from shifted graph signals while the shift matrix $\bS$ is not energy preserving in general~\cite{gavili2017shift}. This is due to the fact that the magnitude  of the eigenvalues of the shift operator $\bS$ are not uniformly equal to $1$; the energy of the shifted signal $\bS^m \bx$  changes exponentially with $m$. Thus, the eigenvalue spread of $\bR_{z,k}$ may be large and the LMS update may suffer from slow convergence speed in this case~\cite{sayed2008adaptive}. To address this issue, albeit at an increased computational cost, we resort to a form of Newton's method.  Focusing on the adaptation step, we have:
	\begin{equation}
	\bpsi_k(i+1)=\boldh_k(i)-\mu_k [\nabla_{\boldh}^2 J_k(\boldh_k(i))]^{-1} [\nabla_{\boldh}J_k(\boldh_k(i))],\label{eq:newton}
	\end{equation}
	where $\nabla_{\boldh}^2 J_k(\cdot)$ denotes the Hessian matrix for $J_k(\cdot)$ and $\nabla_{\boldh}J_k(\cdot)$ is its gradient vector, if available. For the quadratic cost function \eqref{eq:lcost}, expression~\eqref{eq:newton} would lead to:
	\begin{equation}
	\bpsi_k(i+1)=\boldh_k(i)+\mu_k\bR_{z,k}^{-1} 
	\big[\br_{zy,k}-\bR_{z,k} \boldh_k(i)\big], \label{eq:quanewton} 
	\end{equation}
	where $\br_{zy,k}=\expec\{\bz_{k}(i)y_k(i) \}$. Note that the second term on the RHS of \eqref{eq:quanewton} requires second-order moments. Since  they are rarely available beforehand, we can replace $\br_{zy,k}-\bR_{z,k} \boldh_k(i)$ by the instantaneous approximation:
	\begin{equation}
	\br_{zy,k}-\bR_{z,k} \boldh_k(i) \approx \bz_k(i) e_k(i) 
	\end{equation} 
	with $e_k(i)={y_k(i)}-\bz_k^\top(i)\boldh_k(i)$. 	The  adaptation step~\eqref{eq:quanewton} becomes:
	\begin{equation}
	\bpsi_k(i+1)=\boldh_k(i) +	\mu_k\widehat{\bR}^{-1}_{z,k}(i) \bz_k(i)e_k(i), \label{eq:LMSNadpt}
	\end{equation}
	where $\widehat{\bR}_{z,k}(i)$  is an estimate for ${\bR}_{z,k}(i)$ which can possibly be  obtained recursively: 
	\begin{equation}
	\widehat{\bR}_{z,k}(i)=(1-\mu)\,\widehat{\bR}_{z,k}(i-1) +\mu\big[\bz_k(i) \bz_k^\top(i)\big],\quad i\geq1, \label{eq:EstHessian}
	\end{equation}
	where $\mu$ is a small factor that can be chosen in $(0, 0.1]$ in practice. It can be verified that $ \expec \{\widehat{\bR}_{z,k}(i)\} = {\bR}_{z,k}$ is an unbiased estimate when $ i\rightarrow \infty$.	 As discussed before, $\bS $ may not be energy preserving and results in a large eigenvalue spread of $\bR_{z,k}$,  which may  even be close to singular. The inverse $\bR^{-1}_{z,k}$ would then be ill-conditioned and lead to undesirable effects. To address this problem, it is common to use regularization~\cite{sayed2008adaptive}. We obtain the diffusion LMS-Newton algorithm:
	\begin{subequations}
		\label{eq:DLMSN}
		\begin{align}
		\bpsi_k(i+1)&=\boldh_k(i) 
		+\mu_k\big[\epsilon\bI
		+\widehat{\bR}_{z,k}(i)\big]^{-1}\bz_k(i)e_k(i),\label{eq:DLMSNa} \\
		\boldh_k(i+1)&=\sum_{\ell \in \N_k}a_{\ell k} \bpsi_\ell(i+1),
		\end{align}
	\end{subequations}
    with $\epsilon \geq 0$ a small regularization parameter.  Compared with the diffusion LMS algorithm~\eqref{eq:DLMS}, algorithm~\eqref{eq:DLMSN} requires first to recursively estimate the Hessian matrix according to~\eqref{eq:EstHessian} and then calculate $\big[\epsilon \bI+\widehat{\bR}_{z,k}(i)\big]^{-1}$. This algorithm can lead to improved performance as shown in the sequel, but at the expense of additional computation cost.
	
	In order to reduce the computational complexity of the LMS-Newton algorithm, we propose to use a preconditioning matrix $\bP_k$ does not depend on the graph signal $\bx(i)$ in the adaptation step, instead of the Hessian matrix $\bR_{z,k}$ or its estimate $\widehat{\bR}_{z,k}$. Since the large eigenvalue spread of the input covariance matrix $\bR_{z,k}$ {results mainly} from the shift matrix $\bS$ and the filter order $M$, we construct an  $M \times M$ preconditioning matrix $\bP_k$ 
	as follows: 
	\begin{equation}
	\bP_k\triangleq \diag\{ \|  [\bS^{(m-1)}]_{k,\bullet}  \|^2  \}_{m=1}^M. 	\label{eq: preconditioned matrix} 
	\end{equation}
	The rationale behind~\eqref{eq: preconditioned matrix} is that, in the case where $\bx(i)$ is i.i.d. with variance $\sigma^2$, it follows from~\eqref{eq: local covariance matrix} that $\bR_{z,k}=\sigma^2\bP_k$. According to~\eqref{eq: preconditioned matrix}, matrix  $\bP_k$ does not depend on $\bx(i)$ and can be evaluated beforehand at each node~$k$ during an initial step. Each node $k$ only requires to know the edge weights in its $M$-hop neighborhood, which can be performed in a decentralized manner. Interestingly, $\bP_k$ is  a diagonal matrix, which means that the matrix product in the adaptation step does not require expensive matrix inversion. Following the same line of reasoning as for the Newton algorithm~\eqref{eq:DLMSN}, a regularization term $\epsilon\bI_{M}$ can be  added to $\bP_k$. This leads to:
	\begin{equation}
	\bD_k=(\epsilon \bI_{M} + \bP_k)^{-1}.
	\end{equation}
	We arrive at the following preconditioned graph diffusion LMS strategy:
	\begin{subequations}\label{eq:DPLMS}
		\begin{align}
		\bpsi_k(i+1)&=\boldh_k(i) +	\mu_k\bD_k \bz_k(i) e_k(i),\label{eq:DPLMSa} \\
		\boldh_k(i+1)&=\sum_{\ell \in \N_k}a_{\ell k} \bpsi_\ell(i+1).\label{eq:DPLMSb}
		\end{align}
	\end{subequations}
	At each iteration $i$, node $k$ uses the local information to update the intermediate estimate $\bpsi_k(i+1)$ in the adaptation step~\eqref{eq:DPLMSa}. Then, in the combination step~\eqref{eq:DPLMSb}, the intermediate estimates~$\bpsi_\ell(i+1)$ from the neighborhood of node $k$ are combined to get $\boldh_k(i+1)$.  Although the preconditioning matrix $\bD_k$ is not the true Hessian matrix, we prove in Section~\ref{sec:analy} that the algorithm converges to the optimal solution $\boldh^o$ provided that it is stable. 
	
\subsection{Comparison with the graph diffusion LMS}

We explain  how preconditioning with~\eqref{eq: preconditioned matrix} improves  performance. For comparison purposes, let us first focus on the adaptation step of  diffusion LMS. 
At each node $k$, the $m$-th entry of $\boldh_k(i)$ is updated as follows: 
	\begin{equation}
	[\bpsi_k(i+1)]_m=[\boldh_k(i)]_m+\mu_k\big[ \bz_k(i)e_k(i)\big]_m.
	\end{equation}
During the transient phase, the $m$-th entry $[\boldh_k(i)]_m$ exponentially converges to its optimal value with a time constant~\cite{sayed2008adaptive}:
	\begin{equation}
	\tilde{\tau}_{m} \approx \frac{1}{2\mu_k\lambda_{m}}
	\end{equation}
where $\lambda_{m}$ denotes the $m$-th eigenvalue of $\bR_{z,k}$.  Given $\mu_k$, the convergence rate of each entry of $\boldh_k(i)$ then depends on the corresponding eigenvalue. Disparity between entries increases as the eigenvalue spread defined as $\lambda_{\max}/\lambda_{\min}$ increases.

The preconditioning matrix $\bD_k$ is diagonal at each node~$k$, which means that the $m$-th entry of $\boldh_k(i)$ in \eqref{eq:DPLMSa}  converges to its optimal value with a time constant:
    \begin{equation}
	    \tilde{\tau}_{m} \approx \frac{1}{2\mu_k d_{k,m}\lambda_{m}}
	\end{equation}
where $d_{k,m}$ denotes the $m$-th diagonal entry of $\bD_k$.
The convergence speed now depends on $d_{k,m}\lambda_{m}$. Considering the case where $\bP_k$ is proportional to $\bR_{z,k}$, then $d_{k,m}$ is inversely proportional to $\lambda_m$, which mitigates the effects of the eigenvalues spread. We shall analyze and illustrate in Section~\ref{sec:analy} and Section~\ref{sec:simu}, respectively, how this preconditioning improves convergence speed in more general cases.

\section{Performance analysis}\label{sec:analy}
We shall now analyze the stochastic behavior of the diffusion preconditioned LMS (PLMS) algorithm~\eqref{eq:DPLMS} in the sense of mean and mean-square error. We introduce the  following weight error vectors at each node $k$:
	\begin{equation}
	\tilde{\boldh}_k(i)=\boldh^o-\boldh_k(i), \qquad \tilde{\bpsi}_k(i)=\boldh^o-\bpsi_k(i), 
	\end{equation}
and we collect them across the nodes into the network weight error vectors :
	\begin{align}
	\tilde{\boldh}(i) &\triangleq \col \{ \tilde{\boldh}_1(i), \tilde{\boldh}_2(i), \ldots, \tilde{\boldh}_N(i)\}, \\
	\tilde{\bpsi}(i) &\triangleq \col \{ \tilde{\bpsi}_1(i), \tilde{\bpsi}_2(i), \ldots, \tilde{\bpsi}_N(i)\}.
	\end{align}
We refer to \textit{mean stability} of the error vector $\tilde{\boldh}(i)$ if the limit superior of $\|\expec \tilde{\boldh}(i)\|$ is bounded. Furthermore, we will claim that the algorithm converges in the mean to the optimum if $\expec\tilde{\boldh}(i)$ converges to zero as $i$ tends to $+\infty$ regardless of the starting point. \textit{Mean-square stability} refers to the case where the superior limit of $\expec\|\tilde{\boldh}(i)\|^2$ is bounded.

Let us introduce the following $N\times N$ block matrices with individual entries of size $M \times M$:
	\begin{align}
	\cA &\triangleq\bA\otimes \bI_M,\\
	\cM &\triangleq \bdiag\{\mu_k\bI_M \}_{k=1}^{N},\\
	\cD &\triangleq \bdiag\{ \bD_k\}_{k=1}^{N}.
	\end{align}
The estimation error  in~\eqref{eq:DPLMSa} can be written as:
	\begin{equation}
	e_k(i)={y_k(i)}-\bz_k^\top(i)\boldh_k(i)=\bz_k^\top(i)\tilde{\boldh}_k(i)+v_k(i).
	\end{equation}
Subtracting $\boldh^o$ from both sides of~\eqref{eq:DPLMSa} and using the above relation, then stacking $\tilde{\bpsi}_k(i) $ across the nodes, we  obtain
	\begin{equation}
		\tilde{\bpsi}(i+1) =\left(\bI_{NM}-\cM \cD \cR_z(i)\right)\tilde{\boldh}(i)-\cM \cD \bp_{zv}(i), \label{eq:psierror}
	\end{equation}
where $\cR_z(i)$ is an $N \times N $ block matrix with entries of size $M \times M$ define as:
	\begin{equation}
	\cR_z(i) \triangleq \bdiag\{\bz_k(i)\bz_k^\top(i)\}_{k=1}^{N},
	\end{equation}
and $\bp_{zv}(i)$ is an $N \times 1 $ block column vector with entries of size $M \times 1$ given by:
	\begin{equation}
	\bp_{zv}(i)\triangleq {\col}\{\bz_k(i)v_k(i) \}_{k=1}^{N}.
	\end{equation}	
Subtracting $\boldh^o$ from both sides of~\eqref{eq:DPLMSb}, we obtain the block weight error vector:
	\begin{equation}
	\tilde{\boldh}(i+1)= \cA^\top \tilde{\bpsi}(i+1). \label{eq:herror}
	\end{equation}
Finally, combing~\eqref{eq:psierror} and~\eqref{eq:herror}, the network weight error vector $\tilde{\boldh}(i)$ of algorithm~\eqref{eq:DPLMS} evolves according to:
	\begin{equation}
	\tilde{\boldh}(i+1)=\cB(i)\tilde{\boldh}(i)-\cA^\top \cM \cD \bp_{zv}(i), \label{eq:errorrec}
	\end{equation}
with \begin{equation}
	\cB(i)= \cA^\top \Big(\bI_{NM}-\cM \cD \cR_z(i)\Big).
	\end{equation}
To proceed with the analysis, we  introduce the following assumption. 
\begin{assumption}[independent inputs]
		\label{assum: independence}
The inputs {$\bz_{k}(i)$} arise from a zero-mean random process that is temporally white with $\bR_{z,k} >0$.
\end{assumption}
A  consequence of Assumption~\ref{assum: independence} is that $\bz_{k}(i)$ is independent of $ \boldh_{\ell}(j)$ for all $\ell$ and $ j \leq i$. This independence assumption is not true in the current work. Two successive regressors $\bz_k$ involve common entries that cannot be statistically independent as in a conventional FIR implementation. However, when the step-size is sufficiently small, conclusions derived under this assumption tend to be realistic. For more details and discussions, see~\cite[Section~16.4]{sayed2008adaptive}.  Since this assumption helps to simplify the derivations without constraining the conclusions, it is widely used in the literature of adaptive filters and adaptive networks~\cite{sayed2013diffusion,sayed2008adaptive}.  We shall see  in Section~\ref{sec:simu}  that   the resulting expressions  match well the simulation results for sufficiently small step-sizes. 
	
\subsection{Mean-error behavior analysis}
	
Taking expectations of both sides of \eqref{eq:errorrec}, using the fact that $\expec \bp_{zv}(i)=0 $, and applying Assumption~\ref{assum: independence}, we find that the network mean error vector  evolves according to:
	\begin{equation}
	\expec 	\tilde{\boldh}(i+1)=\cB\,\expec \tilde{\boldh}(i), \label{eq:MeanEpc}
	\end{equation}
where:
	\begin{align}
	\cB &\triangleq\expec \cB(i)=\cA^\top(\bI_{NM}-\cM\cD\cR_z),\\
	\cR_z &\triangleq\expec \cR_z(i)= \bdiag\{\bR_{z,k}\}_{k=1}^{N}.
	\end{align} 
	\begin{theorem}[Convergence in the mean] \label{th:1}
Assume that data model~\eqref{eq:model} and Assumption~\ref{assum: independence} hold. Then, for any initial condition,  algorithm \eqref{eq:DPLMS} converges  asymptotically in the mean toward the optimal vector $\boldh^o$ if, and only if, the step-sizes in $\cM$ are chosen to satisfy:
		\begin{equation}
		\rho\left(\cA^\top(\bI_{NM}-\cM\cD\cR_z)\right)<1, \label{eq:stabilityrho}
		\end{equation} 
where $\rho(\cdot)$ denotes the spectral radius of its matrix argument.  In the case where the signal $\bx(i)$ is i.i.d, a sufficient condition for~\eqref{eq:stabilityrho}  to hold is to choose $\mu_k$ such that:
		\begin{equation}
		0<\mu_k<\frac{2}{\lambda_{\max}(\bD_k\bR_{z,k})}, \qquad k=1, \ldots, N. \label{eq:stability}
		\end{equation}
	\end{theorem}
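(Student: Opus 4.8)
The plan is to exploit the fact that the mean recursion~\eqref{eq:MeanEpc} is linear, homogeneous, and time-invariant. First I would iterate it to obtain $\expec\tilde{\boldh}(i)=\cB^i\,\expec\tilde{\boldh}(0)$, so that the mean weight-error vector decays to zero for \emph{every} initial condition precisely when $\cB^i\to\Zero$ as $i\to\infty$. Invoking the standard linear-systems fact that the powers of a square matrix converge to zero if and only if its spectral radius is strictly below one, the stated iff characterization~\eqref{eq:stabilityrho} follows immediately once $\cB=\cA^\top(\bI_{NM}-\cM\cD\cR_z)$ is substituted.

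For the sufficient condition in the i.i.d.\ case, I would pass from the spectral radius to the block maximum norm via $\rho(\cB)\leq\|\cB\|_{b,\infty}$ together with submultiplicativity, $\|\cB\|_{b,\infty}\leq\|\cA^\top\|_{b,\infty}\,\|\bI_{NM}-\cM\cD\cR_z\|_{b,\infty}$. Since $\bA$ is left-stochastic, $\cA^\top=\bA^\top\otimes\bI_M$ has nonnegative rows summing to one, so $\|\cA^\top\|_{b,\infty}=1$ and the combination matrix drops out of the bound. Because $\cM$, $\cD$, and $\cR_z$ are all block diagonal, the remaining factor $\bI_{NM}-\cM\cD\cR_z=\bdiag\{\bI_M-\mu_k\bD_k\bR_{z,k}\}_{k=1}^N$ is block diagonal, and its block maximum norm reduces to the largest of the block spectral norms, yielding $\rho(\cB)\leq\max_k\|\bI_M-\mu_k\bD_k\bR_{z,k}\|_2$.

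It then remains to make each block norm strictly less than one, and here the i.i.d.\ hypothesis is what makes the argument clean. By construction $\bP_k$ in~\eqref{eq: preconditioned matrix} is diagonal, hence so is $\bD_k$, and when $\bx(i)$ is i.i.d.\ the local covariance satisfies $\bR_{z,k}=\sigma^2\bP_k$; the product $\bD_k\bR_{z,k}$ is therefore diagonal, hence symmetric, with positive eigenvalues (positivity coming from $\bR_{z,k}>0$ in Assumption~\ref{assum: independence}). For a symmetric matrix the spectral norm coincides with the spectral radius, so $\|\bI_M-\mu_k\bD_k\bR_{z,k}\|_2=\max_m|1-\mu_k\lambda_m(\bD_k\bR_{z,k})|$, which is below one exactly when $0<\mu_k\lambda_m(\bD_k\bR_{z,k})<2$ for every $m$; the binding constraint is the largest eigenvalue, which gives~\eqref{eq:stability}.

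The step I expect to require the most care is the identity $\|\bI_M-\mu_k\bD_k\bR_{z,k}\|_2=\rho(\bI_M-\mu_k\bD_k\bR_{z,k})$, since it rests on symmetry of $\bD_k\bR_{z,k}$. For general correlated $\bx(i)$ the matrix $\bR_{z,k}$ is not diagonal, $\bD_k\bR_{z,k}$ is no longer symmetric, and the spectral norm may strictly exceed the spectral radius; this is precisely why the explicit step-size bound~\eqref{eq:stability} is claimed only under the i.i.d.\ assumption, whereas the exact characterization~\eqref{eq:stabilityrho} holds in full generality. I would also verify carefully that $\|\cA^\top\|_{b,\infty}=1$ rather than merely $\leq 1$, so that no slack is introduced before the block-diagonal factor is bounded.
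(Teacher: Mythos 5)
Your proposal is correct and follows essentially the same route as the paper: reduce mean convergence to $\rho(\cB)<1$, bound $\rho(\cB)$ by the block maximum norm using $\|\cA^\top\|_{b,\infty}=1$ for the left-stochastic $\bA$, and use the symmetry of $\cD\cR_z$ (valid in the i.i.d.\ case where $\bR_{z,k}=\sigma^2\bP_k$ is diagonal) to equate the block maximum norm of the block-diagonal factor with its spectral radius, yielding~\eqref{eq:stability}. Your explicit remark that symmetry of $\bD_k\bR_{z,k}$ is exactly what fails for correlated inputs is a slightly more careful articulation of the step the paper attributes to its Lemma~D.5 citation, but it is the same argument.
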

	\begin{proof}
The weight error vector $\tilde{\boldh}(i)$ converges to zero if, and only if, the coefficient matrix $\cB$ in~\eqref{eq:MeanEpc} is a stable matrix, namely, $\rho(\cB)<1 $. Since any induced matrix norm is lower bounded by the spectral radius, we have the following relation in terms of  block maximum norm~\cite{sayed2013diffusion}:
		\begin{align}
		\rho(\cB) &\leq \|\cA^\top(\bI_{NM}-\cM\cD\cR_z) \|_{b,\infty} \notag\\
		&\leq\|\cA^\top \|_{b,\infty} \cdot \|\bI_{NM}-\cM\cD\cR_z \|_{b,\infty}  \notag\\
		&= \|\bI_{NM}-\cM\cD\cR_z \|_{b,\infty} , \label{eq:rhoBbound1}
		\end{align} 
		where the last equality follows from the fact that $\cA$ is left stochastic, which implies that $\|\cA^\top \|_{b,\infty} =1$ from Lemma D.4 of~\cite{sayed2013diffusion}. Matrix $\cR_{z}$ is block diagonal if $\bx(i)$ is i.i.d. Since $\cD$ is also  diagonal, their product is symmetric, and $\cM\cD\cR_z$ is a block diagonal symmetric matrix. Then, following  Lemma D.5 of~\cite{sayed2013diffusion}, its block maximum norm agrees with its spectral radius:
		\begin{eqnarray}
		\|\bI_{NM}-\cM\cD\cR_z \|_{b,\infty}=\rho( \bI_{NM}-\cM\cD\cR_z ). \label{eq:rhoBbound2}
		\end{eqnarray}
Combining \eqref{eq:rhoBbound1} and \eqref{eq:rhoBbound2}, we verify that condition \eqref{eq:stability}  ensures the stability of $\cB$.
	\end{proof}
\subsection{Mean-square-error behavior analysis}
	
We shall now study the mean-square-error behavior of algorithm~\eqref{eq:DPLMS}. Let $\bSig$ be any $NM\times NM$   positive semi-definite matrix that we are free to choose. The freedom in selecting $\bSig$  will allow us to derive  different   performance measures about the network and the nodes. We consider the  weighted mean-square error vector, i.e., $\expec  \|\tilde{\boldh}(i)\|_{\bSig}^2$, where $\|\tilde{\boldh}(i) \|_{\bSig}^2\triangleq \tilde{\boldh}^\top(i)  \bSig \tilde{\boldh}(i)$. From   Assumption~\ref{assum: independence} and  $\expec  \bp_{zv}(i)=\Zero $,  using~\eqref{eq:errorrec}, we obtain  the following variance relation:
	\begin{equation}
	\expec  \|  \tilde{\boldh}(i+1) \|_{\bSig}^2=\expec \|  \tilde{\boldh}(i) \|_{\bSig^\prime}^2+\expec \|  \cA^\top \cM \cD \bp_{zv}(i) \|_{\bSig}^2,\label{eq:MSErec}
	\end{equation}
where $\bSig^\prime \triangleq \expec\{\cB^\top (i) \bSig  \cB (i)\}$. Let $\bsig$ denote  the $(NM)^2 \times 1$ vector obtained by  vectorizing matrix $\bSig$, namely, $\bsig=\vc(\bSig)$.  With some abuse of notation, we shall use $\| \cdot \|^2_{\bsig}$  to also refer to the  quantity $\| \cdot\|^2_{\bSig}$ when it is more convenient.  Let  $\bsig^\prime=\vc (\bSig^\prime)$. Considering that $ \vc(\bU\bSig \bW) = (\bW^\top \otimes \bU)\bsig$, it can be verified that:
	\begin{equation}
	\bsig^\prime=\cF \bsig, \label{eq:bsigp}
	\end{equation}
where $\cF$ is the $(NM)^2 \times(NM)^2$ matrix given by
	\begin{align}
	\cF \triangleq & \expec\{ \cB^\top(i) \otimes \cB^\top(i)\} \notag \\
	= &\Big(\bI_{(NM)^2}-\bI_{NM} \otimes \cR_z^\top \cD\cM - \cR_z^\top \cD\cM \otimes \bI_{NM} \notag\\
	  & +\cO(\cM^2) \Big) (\cA\otimes \cA),  \label{eq:cFi}
	\end{align}
where $\cO(\cM^2)$ denotes $\expec \{\cR_z^\top(i)\cD\cM \otimes \cR_z^\top(i)\cD\cM \}$, which depends on the square of the step-sizes, $\{\mu_k^2\}$. While we can continue the analysis by taking this factor into account as was done in other studies \cite{sayed2003fundamentals}, it is sufficient for the exposition to focus on the case of sufficiently small step-sizes where terms involving higher powers of the step-sizes $\{\mu_k\}$ can be ignored. Following the same line of reasoning, for sufficiently small step-sizes $\{\mu_k\}$, $\cF$ can be approximated by:
	\begin{equation}
	\cF \approx \cB^\top \otimes \cB^\top. \label{eq:cFaprx}
	\end{equation}
The second term on the RHS of \eqref{eq:MSErec} can be written as:
	\begin{equation}
	\expec \|  \cA^\top \cM \cD \bp_{zv}(i) \|_{\bSig}^2=\tr(\bSig \cG), \label{eq:2RHS}
	\end{equation}
where
	\begin{align}
	\cG &\triangleq \cA^\top\cM \cD \cS \cD \cM \cA, \\
	\cS &\triangleq \expec\{\bp_{zv}(i)\bp^\top_{zv}(i) \}=\bdiag\{\bsig_{v,k}^2 \bR_{z,k} \}_{k=1}^{N}.
	\end{align}
Using the property  $\tr(\bSig\bW)=[\vc(\bW^\top)]^\top \bsig$, combining~\eqref{eq:bsigp} and~\eqref{eq:2RHS}, the variance relation~\eqref{eq:MSErec} can be re-written as:
	\begin{equation}
	\expec \|  \tilde{\boldh}(i+1) \|_{\bsig}^2=\expec \|  \tilde{\boldh}(i) \|_{\cF\bsig}^2+[\vc(\cG^\top)]^\top \bsig. \label{eq:var}
	\end{equation}
	\begin{theorem}[Stability in the mean-square]
Assume that data model~\eqref{eq:model} and Assumption~\ref{assum: independence} hold.  Algorithm \eqref{eq:DPLMS}  converges in the mean-square sense  if the matrix $\cF$ in~\eqref{eq:cFi} is stable. Assume further that the step-sizes are sufficiently small such that~\eqref{eq:cFaprx} is a reasonable approximation. In that case,  the stability of $\cF$ is ensured if $\cB$ is stable. 
	\end{theorem}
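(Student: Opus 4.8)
The plan is to unroll the weighted-variance recursion~\eqref{eq:var} into closed form and then read off the stability conditions. Writing $\bb\triangleq\vc(\cG^\top)$, so that~\eqref{eq:var} reads $\expec\|\tilde{\boldh}(i+1)\|_{\bsig}^2=\expec\|\tilde{\boldh}(i)\|_{\cF\bsig}^2+\bb^\top\bsig$, I would iterate it, replacing the weight vector by $\cF\bsig$ at each step, to obtain
\begin{equation}
\expec\|\tilde{\boldh}(i)\|_{\bsig}^2=\expec\|\tilde{\boldh}(0)\|_{\cF^{i}\bsig}^2+\bb^\top\Big(\sum_{j=0}^{i-1}\cF^{j}\Big)\bsig.
\end{equation}
Because $\bSig$ is free, choosing $\bsig=\vc(\bI_{NM})$ turns the left-hand side into the mean-square deviation $\expec\|\tilde{\boldh}(i)\|^2$, so mean-square stability reduces to showing that the right-hand side stays bounded as $i\to\infty$.

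Next I would show that stability of $\cF$, i.e. $\rho(\cF)<1$, is precisely what makes both terms converge. The transient term $\expec\|\tilde{\boldh}(0)\|_{\cF^{i}\bsig}^2$ is a fixed linear functional of $\cF^{i}\bsig$, and hence vanishes since $\cF^{i}\to\Zero$; meanwhile the partial sums $\sum_{j=0}^{i-1}\cF^{j}$ converge to the Neumann series $(\bI_{(NM)^2}-\cF)^{-1}$. Consequently $\expec\|\tilde{\boldh}(i)\|^2\to\bb^\top(\bI_{(NM)^2}-\cF)^{-1}\bsig$, a finite limit, which establishes the first claim and, for an arbitrary admissible $\bSig$, simultaneously delivers the steady-state performance expression.

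For the second claim, under the small-step-size approximation~\eqref{eq:cFaprx} I would invoke the Kronecker eigenvalue rule: the eigenvalues of $\cB^\top\otimes\cB^\top$ are the products $\lambda_p(\cB^\top)\lambda_q(\cB^\top)$, and since transposition preserves eigenvalues this gives $\rho(\cF)\approx\rho(\cB^\top\otimes\cB^\top)=\rho(\cB)^2$. Hence $\rho(\cB)<1$ forces $\rho(\cF)<1$, so stability of $\cB$---the condition~\eqref{eq:stabilityrho} already isolated in Theorem~\ref{th:1}---is sufficient for mean-square stability.

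I do not anticipate a serious obstacle: the argument is the standard unrolling-plus-Neumann-series computation, and the Kronecker identity is elementary. The only point requiring care is that the neglected $\cO(\cM^2)$ term in~\eqref{eq:cFi} does not overturn the spectral conclusion $\rho(\cF)\approx\rho(\cB)^2$, which is exactly what restricting to sufficiently small step-sizes guarantees.
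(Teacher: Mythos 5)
Your proposal is correct and follows essentially the same route as the paper: unrolling the variance recursion~\eqref{eq:var}, invoking $\cF^{i}\to\Zero$ and the Neumann series for the two terms, and then using the Kronecker eigenvalue identity $\rho(\cB^\top\otimes\cB^\top)=\rho(\cB)^2$ under the small-step-size approximation~\eqref{eq:cFaprx}. The only cosmetic difference is that you additionally specialize $\bsig=\vc(\bI_{NM})$ to exhibit the MSD limit, which the paper defers to the subsequent corollaries.
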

	\begin{proof}
Iterating~\eqref{eq:var} starting from $i=0$, we obtain
		\begin{equation}
		\expec \|  \tilde{\boldh}(i+1) \|_{\bsig}^2=\expec \|  \tilde{\boldh}(0) \|_{\cF^{i+1}\bsig}^2+ [\vc(\cG^\top)]^\top\sum_{j=0}^{i} \cF^j \bsig, \label{eq:varrlt}
		\end{equation}
with initial condition $\tilde{\boldh}(0)=\boldh^o-\boldh(0)$. Provided $\cF$ is stable, $\cF^i \rightarrow 0$ as $i \rightarrow \infty$, then the first item on the RHS of~\eqref{eq:varrlt}  converges to zero and the second item converges to a finite value.  The weighted mean-square error converges to a finite value as $i\rightarrow\infty$ which implies that the algorithm \eqref{eq:DPLMS} will converge in the mean-square sense  if  $\cF$  is stable. Under the  sufficiently small  step-sizes assumption where the higher-order terms of $\cF$ in~\eqref{eq:cFi}  can be neglected,  approximation~\eqref{eq:cFaprx} is reasonable.  The eigenvalues of  $\cF$ are all the products of the eigenvalues of $\cB$, which means that $\rho(\cF)=[\rho(\cB)]^2$. It follows that $\cF$ is stable if $\cB$ is stable. Therefore, when the graph signal $\bx(i)$ is i.i.d, according to Theorem~\ref{th:1}, condition~\eqref{eq:stability}  ensures mean-square stability of the  algorithm under the assumed approximation~\eqref{eq:cFaprx}. 
	\end{proof}
	\begin{theorem}[Network transient MSD] Assume sufficiently small step-sizes that ensure mean and mean-square stability. The network transient mean-square deviation (MSD) defined as  $\zeta(i)=\frac{1}{N} \expec\|\tilde{\boldh}(i) \|^2 $  evolves according to the following recursion for $i \geq 0$:
		\begin{align}
		\zeta(i+1) =\zeta(i)  + \frac{1}{N} \Big(& [\vc(\expec\{\tilde{\boldh}(0)\tilde{\boldh}^\top(0)\}) ]^\top (\cF-\bI_{(NM)^2})\notag\\
		&+[\vc(\cG^\top)]^\top\Big)\cF^{i}\vc(\bI_{NM}). \label{eq:MSD1}
		\end{align}	
	\end{theorem}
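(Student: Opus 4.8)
The plan is to specialize the iterated variance relation~\eqref{eq:varrlt} to the unweighted case and form the first difference $\zeta(i+1)-\zeta(i)$. Since $\zeta(i)=\frac{1}{N}\expec\|\tilde{\boldh}(i)\|^2$ is the mean-square deviation associated with the trivial weighting $\bSig=\bI_{NM}$, I would set $\bsig=\vc(\bI_{NM})$ throughout. Writing~\eqref{eq:varrlt} at the two indices $i+1$ and $i$ produces expressions that differ only in (i) the power of $\cF$ acting on the initial condition and (ii) one extra summand in the geometric series. Subtracting them, the telescoping sum collapses to the single term $[\vc(\cG^\top)]^\top\cF^{i}\bsig$, while the initial-condition contributions combine into $\expec\|\tilde{\boldh}(0)\|^2_{\cF^{i+1}\bsig}-\expec\|\tilde{\boldh}(0)\|^2_{\cF^{i}\bsig}$.

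The key step is to re-express these initial-condition weighted norms in vectorized form. Starting from $\|\tilde{\boldh}(0)\|^2_{\bSig}=\tr(\bSig\,\tilde{\boldh}(0)\tilde{\boldh}^\top(0))$, taking expectations, and applying the identity $\tr(\bSig\bW)=[\vc(\bW^\top)]^\top\bsig$ already invoked in the derivation of~\eqref{eq:var}, I obtain $\expec\|\tilde{\boldh}(0)\|^2_{\bSig}=[\vc(\bW_0)]^\top\bsig$ with $\bW_0\triangleq\expec\{\tilde{\boldh}(0)\tilde{\boldh}^\top(0)\}$, where I use that $\bW_0$ is symmetric so that $\vc(\bW_0^\top)=\vc(\bW_0)$. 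Because this map from the weighting vector to the expected weighted norm is linear, the difference of the two initial-condition terms equals $[\vc(\bW_0)]^\top(\cF^{i+1}-\cF^{i})\bsig$.

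To finish, I would factor $\cF^{i+1}-\cF^{i}=(\cF-\bI_{(NM)^2})\cF^{i}$, which is legitimate since $\cF^{i}$ commutes with $\cF-\bI_{(NM)^2}$, and then collect the two surviving contributions to obtain
\[
N\big(\zeta(i+1)-\zeta(i)\big)=\Big([\vc(\bW_0)]^\top(\cF-\bI_{(NM)^2})+[\vc(\cG^\top)]^\top\Big)\cF^{i}\bsig.
\]
Dividing by $N$ and substituting back $\bsig=\vc(\bI_{NM})$ reproduces~\eqref{eq:MSD1} verbatim.

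I do not expect a serious obstacle: the statement is essentially bookkeeping layered on top of the already-established recursion~\eqref{eq:varrlt}. The only points demanding care are the consistent handling of the weighting-vector notation $\|\cdot\|^2_{\bsig}$ versus $\|\cdot\|^2_{\bSig}$ and the linearity argument that lets me pull the differing powers of $\cF$ out of the expected weighted norm; both hinge on the trace identity together with the symmetry of the initial error covariance $\bW_0$. I also tacitly rely on mean and mean-square stability so that the underlying recursions are valid, and, through~\eqref{eq:cFaprx}, on the small step-size regime, but neither is needed beyond invoking~\eqref{eq:varrlt}.
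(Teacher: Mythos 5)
Your proposal is correct and follows essentially the same route as the paper's own proof: both compare the iterated variance relation~\eqref{eq:varrlt} at consecutive time indices, convert the initial-condition weighted norms via the trace/vectorization identity, factor $\cF^{i+1}-\cF^{i}=(\cF-\bI_{(NM)^2})\cF^{i}$, and substitute the weighting vector $\frac{1}{N}\vc(\bI_{NM})$. You simply spell out the symmetry and linearity bookkeeping that the paper leaves implicit.
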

	\begin{proof}
Comparing \eqref{eq:varrlt} at time $i+1$ and $i$, $\expec \|  \tilde{\boldh}(i+1) \|_{\bsig}^2$ is related to $\expec \|  \tilde{\boldh}(i) \|_{\bsig}^2$ as follows:
		\begin{align}
		\expec \|&  \tilde{\boldh}(i+1) \|_{\bsig}^2=\expec \|  \tilde{\boldh}(i) \|_{\bsig}^2 +[\vc(\cG^\top)]^\top\cF^i \bsig+\notag\\
		&[\vc(\expec\{\tilde{\boldh}(0)\tilde{\boldh}^\top(0)\}) ]^\top (\cF-\bI_{(NM)^2})\cF^{i}\bsig.
		\end{align}
Substituting $\bsig$ by $\frac{1}{N}\vc(\bI_{NM})$ leads to \eqref{eq:MSD1}.
	\end{proof}
Although  expression~\eqref{eq:MSD1} gives a compact form of  the transient MSD model, it may not be practical to  use   since $\cF$  is of size $(NM)^2\times(NM)^2$ and may become  huge for large networks or high order filters. For example, in the   simulation Section~\ref{sec:simu}, we considered a network consisting of $N=60$ nodes and a graph filter of degree $M=5$. Matrix $\cF$ is of size  $90000\times90000$ and requires a prohibitive amount of computational time and memory space (about 60GB in that case). To tackle this issue, we make use of the following properties of the Kronecker product:
	\begin{align}
	\vc(\bX\bY\bZ)&=(\bZ^\top \otimes \bX) \vc(\bY) \\
	\tr(\bX\bY)&=\big( \vc(\bY^\top)\big)^\top \vc(\bX).
	\end{align}
This leads to:
	\begin{corollary}[Alternative network transient MSD expression]
		\begin{equation}
			\begin{split}
			\zeta(i+1)&=\zeta(i) +\frac{1}{N} \tr\Big(  \cB^i \cG (\cB^i)^\top  \\ 
			&+\tilde{\boldh}(0)\tilde{\boldh}^\top(0)\left((\cB^{i+1})^\top \cB^{i+1}    -(\cB^i)^\top\cB^i   \right) \Big).
			\end{split}
			\end{equation}
	\end{corollary}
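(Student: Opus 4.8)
The plan is to transform the compact transient MSD recursion~\eqref{eq:MSD1} into the trace form of the corollary by repeatedly invoking the two Kronecker-product identities stated just before it, together with the small-step-size approximation~\eqref{eq:cFaprx}. First I would substitute $\cF \approx \cB^\top \otimes \cB^\top$, which by the mixed-product property of the Kronecker product and $(\cB^\top)^i=(\cB^i)^\top$ immediately yields $\cF^i=(\cB^i)^\top \otimes (\cB^i)^\top$, and likewise $\cF^{i+1}=(\cB^{i+1})^\top \otimes (\cB^{i+1})^\top$. This reduces the whole problem to an algebraic reformulation; no new convergence or stability argument is needed, since mean and mean-square stability are already assumed in the hypotheses inherited from the preceding theorem.

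Next I would treat the two contributions in~\eqref{eq:MSD1} separately. For the driving term involving $\cG$, I would apply $\vc(\bX\bY\bZ)=(\bZ^\top \otimes \bX)\vc(\bY)$ with $\bX=(\cB^i)^\top$, $\bZ=\cB^i$ and $\bY=\bI_{NM}$, which collapses $\cF^i\vc(\bI_{NM})$ into $\vc\big((\cB^i)^\top\cB^i\big)$. The inner product $[\vc(\cG^\top)]^\top \vc\big((\cB^i)^\top\cB^i\big)$ is then converted, via $\tr(\bX\bY)=\big(\vc(\bY^\top)\big)^\top\vc(\bX)$ with $\bX=(\cB^i)^\top\cB^i$ and $\bY=\cG$, into $\tr\big((\cB^i)^\top\cB^i\,\cG\big)$; a cyclic permutation of the trace gives $\tr\big(\cB^i\cG(\cB^i)^\top\big)$, the first term of the claimed expression.

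For the initial-condition term I would first use $(\cF-\bI_{(NM)^2})\cF^i=\cF^{i+1}-\cF^i$, and then apply the same two identities to each power, obtaining $\cF^{i+1}\vc(\bI_{NM})=\vc\big((\cB^{i+1})^\top\cB^{i+1}\big)$ and $\cF^i\vc(\bI_{NM})=\vc\big((\cB^i)^\top\cB^i\big)$. Invoking $\tr(\bX\bY)=\big(\vc(\bY^\top)\big)^\top\vc(\bX)$ once more, treating $\boldh(0)$ as deterministic so that $\expec\{\tilde{\boldh}(0)\tilde{\boldh}^\top(0)\}=\tilde{\boldh}(0)\tilde{\boldh}^\top(0)$ and using that this rank-one matrix equals its own transpose, produces $\tr\big(\tilde{\boldh}(0)\tilde{\boldh}^\top(0)\big((\cB^{i+1})^\top\cB^{i+1}-(\cB^i)^\top\cB^i\big)\big)$. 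Adding the two traces and factoring out $\frac1N$ reproduces the stated recursion exactly.

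The main obstacle, really the only place where care is needed, is the bookkeeping of transposes: correctly matching the roles of $\bX$, $\bY$, $\bZ$ in each identity so that every transpose lands on the intended factor, and recognizing that the symmetry of the rank-one matrix $\tilde{\boldh}(0)\tilde{\boldh}^\top(0)$ (and the symmetry of $\cG=\cA^\top\cM\cD\cS\cD\cM\cA$, which holds because $\cM$, $\cD$ are diagonal and $\cS$ is block-diagonal symmetric) is exactly what permits interchanging $\vc(\cdot)$ with $\vc\big((\cdot)^\top\big)$ when applying the trace identity. Everything else is a mechanical chain of substitutions under the approximation~\eqref{eq:cFaprx}.
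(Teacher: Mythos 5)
Your proposal is correct and is precisely the derivation the paper intends: it states the two Kronecker/vec identities immediately before the corollary and leaves the mechanical substitution (with $\cF\approx\cB^\top\otimes\cB^\top$, so $\cF^i\vc(\bI_{NM})=\vc((\cB^i)^\top\cB^i)$, followed by the trace identity and cyclic permutation) implicit. Your transpose bookkeeping checks out; the only cosmetic remark is that the symmetry of $\cG$ is not actually needed, since the identity is applied directly to $\vc(\cG^\top)$.
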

While the update with $\cF$ has a computation complexity of order $\cO((NM)^2)$, using $\cB$ only requires matrix manipulations of order $\cO(NM)$.
	\begin{corollary}[Network steady-state MSD]
		Consider sufficiently small step-sizes to ensure mean and mean-square convergence. The network steady-state  MSD is given by
		\begin{equation}
		\zeta^\star=\frac{1}{N} [\vc(\cG^\top)]^\top (\bI_{(NM)^2}-\cF)^{-1}\ \vc(\bI_{NM}). \label{eq:ssmsd}
		\end{equation}
	\end{corollary}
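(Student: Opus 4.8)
The plan is to pass to the limit $i\to\infty$ in the iterated variance relation~\eqref{eq:varrlt} and to use the stability of $\cF$ established in the preceding theorem. Recall that~\eqref{eq:varrlt} expresses $\expec \|  \tilde{\boldh}(i+1) \|_{\bsig}^2$, for an arbitrary weighting vector $\bsig$, as the sum of a transient term $\expec \|  \tilde{\boldh}(0) \|_{\cF^{i+1}\bsig}^2$ and a driving term $[\vc(\cG^\top)]^\top\sum_{j=0}^{i} \cF^j \bsig$. Under the sufficiently small step-size regime, mean-square convergence is equivalent to $\rho(\cF)<1$, so $\cF$ is stable.

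First I would dispose of the transient term: since $\cF$ is stable, $\cF^{i+1}\to 0$ as $i\to\infty$, hence $\expec \|  \tilde{\boldh}(0) \|_{\cF^{i+1}\bsig}^2\to 0$ for any fixed initial error $\tilde{\boldh}(0)$. Next I would evaluate the driving term via a Neumann series: because $\rho(\cF)<1$, the matrix $\bI_{(NM)^2}-\cF$ is invertible and $\sum_{j=0}^{\infty}\cF^j=(\bI_{(NM)^2}-\cF)^{-1}$, so the second summand converges to $[\vc(\cG^\top)]^\top (\bI_{(NM)^2}-\cF)^{-1}\bsig$. Combining these two limits gives a closed-form expression for $\lim_{i\to\infty}\expec \|  \tilde{\boldh}(i) \|_{\bsig}^2$ valid for every admissible $\bsig$.

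Finally I would specialize the free weighting vector. Writing the MSD as $\zeta^\star=\lim_{i\to\infty}\frac{1}{N}\expec\|\tilde{\boldh}(i)\|^2_{\bI_{NM}}$ identifies the relevant choice $\bsig=\frac{1}{N}\vc(\bI_{NM})$, and substituting it into the limit yields precisely~\eqref{eq:ssmsd}. I do not anticipate any genuine obstacle here: the whole argument is a limit computation, and its single substantive ingredient is the bound $\rho(\cF)<1$ supplied by the previous theorem, which simultaneously forces the transient term to vanish and guarantees both the invertibility of $\bI_{(NM)^2}-\cF$ and the convergence of the associated geometric series.
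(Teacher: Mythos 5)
Your proof is correct and follows essentially the same route as the paper's: both arguments rest on the stability of $\cF$, the variance relation, and a final substitution of a weighting vector proportional to $\vc(\bI_{NM})$. The only cosmetic difference is that you pass to the limit in the iterated relation~\eqref{eq:varrlt} and sum the resulting Neumann series $\sum_{j\ge 0}\cF^{j}=(\bI_{(NM)^2}-\cF)^{-1}$ explicitly, whereas the paper takes the limit of the one-step relation~\eqref{eq:var} and absorbs $(\bI_{(NM)^2}-\cF)^{-1}$ into the choice of $\bsig$; the two computations are identical.
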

	\begin{proof}
The network steady-state  MSD is defined as:
		\begin{equation}
		\zeta^\star=\lim\limits_{i\rightarrow \infty} \frac{1}{N} \expec\{\|\tilde{\boldh}(i) \|^2 \}.
		\end{equation} 
If $\cF$ is stable, we obtain from~\eqref{eq:var} as $i\rightarrow \infty$:
		\begin{equation}
		\lim\limits_{i\rightarrow \infty}\expec \|  \tilde{\boldh}(i) \|_{(\bI_{(NM)^2}-\cF)\bsig}^2=[\vc(\cG^\top)]^\top \bsig. \label{eq:msd}
		\end{equation}
We obtain \eqref{eq:ssmsd} by substituting  $\bsig$ in \eqref{eq:msd} by $\frac{1}{N}(\bI_{(NM)^2}-\cF)^{-1}\ \vc(\bI_{NM})$.
	\end{proof}
	
Following the same  line of reasoning  as for  the transient MSD model,  the steady-state MSD given by~\eqref{eq:ssmsd} can be equivalently expressed as:
	\begin{corollary}[Alternative  steady-state  MSD expression]
		\begin{equation}
		\zeta^\star=\frac{1}{N} \sum_{i=0}^{\infty} \tr\Big( \cB^i \cG  ( \cB^i)^\top  \Big)
		\end{equation}
	\end{corollary}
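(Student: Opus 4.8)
The plan is to transform the closed-form steady-state expression from the previous corollary,
\[
\zeta^\star=\frac{1}{N} [\vc(\cG^\top)]^\top (\bI_{(NM)^2}-\cF)^{-1}\, \vc(\bI_{NM}),
\]
into a trace series in $\cB$ by exploiting the Kronecker structure of $\cF$. Since mean-square stability guarantees $\rho(\cF)<1$, the matrix $\bI_{(NM)^2}-\cF$ is nonsingular and I would expand its inverse as the Neumann series $(\bI_{(NM)^2}-\cF)^{-1}=\sum_{i=0}^{\infty}\cF^i$, so that $\zeta^\star=\frac{1}{N}[\vc(\cG^\top)]^\top\sum_{i=0}^{\infty}\cF^i\,\vc(\bI_{NM})$.

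Next I would invoke the small step-size approximation $\cF\approx\cB^\top\otimes\cB^\top$ of~\eqref{eq:cFaprx}. Because integer powers distribute over the Kronecker product, $\cF^i=(\cB^\top\otimes\cB^\top)^i=(\cB^i)^\top\otimes(\cB^i)^\top$. Applying the vectorization identity $\vc(\bX\bY\bZ)=(\bZ^\top\otimes\bX)\vc(\bY)$ with $\bX=(\cB^i)^\top$, $\bY=\bI_{NM}$, and $\bZ=\cB^i$ then collapses each term to $\cF^i\vc(\bI_{NM})=\vc\big((\cB^i)^\top\cB^i\big)$.

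To finish, I would use the trace identity $\tr(\bX\bY)=[\vc(\bY^\top)]^\top\vc(\bX)$ with $\bY=\cG$ and $\bX=(\cB^i)^\top\cB^i$, which turns the $i$-th summand into $[\vc(\cG^\top)]^\top\vc\big((\cB^i)^\top\cB^i\big)=\tr\big((\cB^i)^\top\cB^i\,\cG\big)$; the cyclic property of the trace then rewrites this as $\tr\big(\cB^i\cG(\cB^i)^\top\big)$. Summing over $i$ and keeping the factor $\frac{1}{N}$ yields the claimed expression $\zeta^\star=\frac{1}{N}\sum_{i=0}^{\infty}\tr\big(\cB^i\cG(\cB^i)^\top\big)$.

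These manipulations are essentially algebraic bookkeeping, so I do not expect a genuine obstacle; the one point deserving care is the interchange of the infinite Neumann sum with the linear $\vc$ and $\tr$ operations. This is justified because $\rho(\cF)=[\rho(\cB)]^2<1$ forces $\sum_{i}\cF^i$ to converge absolutely in any submultiplicative norm, so the resulting series $\sum_i\tr(\cB^i\cG(\cB^i)^\top)$ converges and the term-by-term treatment is legitimate. A secondary nuisance is tracking the transposes through the two Kronecker identities so as to land on $\cB^i\cG(\cB^i)^\top$ rather than a transposed variant.
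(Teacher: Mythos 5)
Your proposal is correct and follows essentially the same route as the paper: a Neumann series expansion of $(\bI_{(NM)^2}-\cF)^{-1}$ in~\eqref{eq:ssmsd}, combined with the approximation $\cF\approx\cB^\top\otimes\cB^\top$ and the two Kronecker/trace identities the paper lists just before the transient-MSD corollary. The algebra, including the handling of transposes and the justification of term-by-term convergence via $\rho(\cF)=[\rho(\cB)]^2<1$, checks out.
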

This expression is obtained by a series expansion of~\eqref{eq:ssmsd}.   In practice, a limited number of iterations can be used instead of the upper limit index $i\rightarrow \infty$ to obtain an  accurate result.	
\section{Unsupervised Clustering for hybrid node-varying graph filter}  \label{sec:multi}
In Section~\ref{sec:diffusion}, we investigated the scenario where the nodes in a graph share a common  filter coefficient vector. Now we extend this model to the more flexible case~\eqref{eq:datamodel}, which allows the filter coefficients to  vary across the graph. We further assume that the graph is decomposed into $Q$ clusters of nodes $\C_q$ and, within each cluster $\C_q$, there is a common filter coefficient vector $\boldh_q^o$ to estimate, namely,
	\begin{equation}
	\boldh_k^o=\boldh_q^o,\qquad\text{if } k\in\C_q.
	\end{equation}
	
We assume that there is no prior information on the clusters composition and that the nodes do not know which other nodes share the same estimation task. Applying the  algorithm~\eqref{eq:DPLMS} within this context may result a bias  due to aggregate intermediate estimates from different data models. To address this issue,   automatic network clustering  strategies may be used~\cite{chen2015diffusion,zhao2015distributed,plata2016unsupervised,khawatmi2017decentralized} in order to inhibit cooperation between nodes from different clusters. These methods are based on local stand-alone estimation strategies 	that may not be efficient for the current context. Basically, the polynomial form~\eqref{eq:gf} of graph filters does not make the 	estimation of the filter coefficients reliable enough for the higher degrees. In the following, we tackle this problem by devising a clustering strategy based on the PLMS.
	
First, we introduce the $N \times N$ instantaneous clustering matrix $\bE_i$, whose  $(\ell,k)$-th entry shows if    node $k$ believes at time $i$ that its neighboring node $\ell$ belongs to the same cluster or not, namely,
	\begin{equation}	
	[\bE_i]_{ \ell k }= \begin{cases}
	1, \;  \text{if}\, \ell \in\N_k \,\text{and} \, k \,  \text{believes that} \, \boldh^o_k=\boldh^o_\ell,\\
	0,\; \text{otherwise}.
	\end{cases}
	\end{equation} 
At each time instant $i$, node $k$  infers which neighbors  belong to its cluster based on the non-zeros entries of the $k$-th  column  of $\bE_i$. We collect these entries into  a set $\N_{k,i}$, so that node $k$  only combines the intermediate estimates  from  its neighbors in $\N_{k,i}$. Condition~\eqref{eq:combcoeff}  on the combination coefficients becomes:
	\begin{equation}
	a_{\ell k} >0, \quad \sum_{\ell=1}^N a_{\ell k} =1, \, 
	\text{ and } \, a_{\ell k} =0 \, \text{ if }\,  \ell \notin \N_{k,i}, \label{eq:combcoeff2}
	\end{equation} 
	
Since the clustering information is unknown beforehand, we propose to learn $\bE_i$ in an online way by computing a Boolean variable $b_{\ell k}(i)$ defined as follows:
	\begin{equation}
	b_{\ell k}(i)=\begin{cases}
	1,\quad \text{if} \,  \|\bpsi_\ell(i+1)-\boldh_k(i) \|^2\leq \beta,  \\ \label{eq:boolean}
	0,\quad \text{otherwise},
	\end{cases}
	\end{equation}
with $\beta>0$ a preset threshold. Variable $b_{\ell k}(i)$ is defined from the  $\ell_2$-norm distance between the estimates at two neighboring nodes. If this distance is smaller than the threshold $\beta$, the two nodes are then assigned to the same cluster. Note that the distance between $\bpsi_\ell(i+1)$ and $\boldh_k(i)$ is used in \eqref{eq:boolean}, instead of the distance between $\boldh_\ell(i+1)$ and $\boldh_k(i+1)$, in order to merge the learning and the clustering processes. Consider the learning process at any node $k$ defined in~\eqref{eq:DPLMS}. Information about clusters is used in the combination step~\eqref{eq:DPLMSb}, where $\N_k$ now denotes the neighboring nodes of $k$ that share the same estimation task as node~$k$. This information should be available as soon as possible in order to avoid estimation bias. Considering the distance between $\boldh_\ell(i+1)$ and $\boldh_k(i+1)$ to decide if nodes $k$ and $\ell$ are in the same cluster would allow to update the composition of sets $\N_k$ and $\N_\ell$ in the combination step~\eqref{eq:DPLMSb} used to calculate parameter vectors $\boldh_\ell(i+2)$ and $\boldh_k(i+2)$. This latency time can be shortened by considering the distance between $\bpsi_\ell(i+1)$ and $\boldh_k(i)$ right after the adaptation step~\eqref{eq:DPLMSa}, and using this information to define $\N_k$ in the combination step.

	
This strategy usually fails if left as is, because the estimation of higher-order coefficients is not reliable enough and results in bad clustering performance. We propose to estimate this distance from $M_k$ principal components of the estimates, to be defined, of the estimates. Because $\bR_{z,k}$ cannot be reasonably used to perform a Principal Component Analysis (PCA) of the input data, as it is rarely available beforehand and would involve significant additional computational effort, we suggest instead using matrix $\bP_k$. As for the PLMS, the rationale behind this is that $\bR_{z,k}=\sigma^2\bP_k$ when $\bx(i)$ is i.i.d. with variance $\sigma^2$. Another interest lies in that $\bP_k$ is a diagonal matrix, which greatly simplifies calculations. Without loss of generality, consider that the diagonal entries of $\bP_k$ are in decreasing order. Projecting data onto the first $M_k$ principal axes then reduces to selecting their first $M_k$ entries and set the other entries to zero. Dimension $M_k$ can be determined by setting the ratio of explained variance to total variance to some desired level $\tau$  as follows:
	\begin{equation}
	\label{eq:inertia}
	\begin{split}
	\minimize ~&~ M_k\\
	\text{subject to }&~ \sum_{m=1}^{M_k} \hat{\pi}_{k,m} \geq \tau
	\end{split}
	\end{equation}
with $\hat{\pi}_{k,m}= [\bp_k]_m /\tr(\bP_k)$   an approximation  of the {\em proportion of total variance}~\cite{jolliffe_principal_2016}. In practice, we can use a predefined threshold $\tau \in [0.9,1) $  to decide how many entries should be retained. The Boolean variable~\eqref{eq:boolean} becomes:
	\begin{equation}
	\label{eq:booleannew}
	b_{\ell k}(i)=\begin{cases}
	1,\quad \text{if} \;   \frac{\|\bpsi^\prime_\ell(i+1)-\boldh^\prime_k(i) \|^2}{\|\boldh^\prime_k(i) \|^2}\leq \beta,  \\
	0,\quad \text{otherwise},
	\end{cases}
	\end{equation}
with $\bpsi^\prime_\ell(i+1)$ and $\boldh^\prime_k(i) $ the first $M_k$ entries of $\bpsi_\ell(i+1)$ and $\boldh_k(i) $ respectively. Compared to~\eqref{eq:boolean}, note that the distance in \eqref{eq:booleannew} that accounts for the similarity between $\bpsi^\prime_\ell(i+1)$ and $\boldh^\prime_k(i)$ has been normalized. We suggest to choose $\beta\in(0,0.01]$ to lower the false detection rate. To reduce noise effects, we further introduce a smoothing step:
	\begin{equation}
	t_{\ell k}(i)=\nu t_{\ell k}(i-1) +(1-\nu)b_{\ell k}(i),
	\end{equation}
where $t_{\ell k}(i)$ is a trust level, and $\nu$ is a forgetting factor in $(0, 1)$ to balance  the past  and present cluster assignments. Once the trust level $t_{\ell k}(i)$ exceeds a preset threshold $\theta$, which can be chosen in $[0.5,1)$, node $k$ concludes that node $\ell$ belongs to its cluster, that is, 
	\begin{equation}
	\label{eq:Ei}
	[\bE_i]_{ \ell k }=\begin{cases}
	1,\quad \text{if} \; t_{\ell k}(i)\geq \theta,  \\
	0,\quad \text{otherwise}.
	\end{cases}
	\end{equation} 
Based on $[\bE_i]_{ \ell k }$, each node $k$ determines  at each time instant $i$ those nodes $\ell$ that it believes they belong to the same cluster, updates the  combination coefficients according to~\eqref{eq:combcoeff2}, and finally combines the estimates from its neighbors with~\eqref{eq:DPLMSb}.

\section{Numerical results} \label{sec:simu}
	
\subsection{Experiment with i.i.d. input data}
We first considered a zero-mean i.i.d. Gaussian graph signal $\bx(i)$  with covariance   $\bR_x=\diag \{\sigma_{x,k}^2\}_{k=1}^N$. Variances $\sigma_{x,k}^2$ were randomly generated from the uniform distribution $\mathcal{U}(1,1.5)$. In this setting, the graph signal sample $x_k(i)$ was independent of $x_\ell(j)$ for all $\ell$ and $ j\leq i$. We assumed the linear data model~\eqref{eq:model}.  The graph filter order was set to $M=5$ and the coefficients $h_m^o$ were randomly generated from the uniform distribution $\mathcal{U}(0,1)$. Noise $\bv(i)$ was zero-mean Gaussian with covariance   $\bR_v=\diag \{\sigma_{v,k}^2\}_{k=1}^N$. Variances $\sigma_{v,k}^2$ were randomly generated from  the  uniform distribution $\mathcal{U}(0.1,0.15)$.
	\begin{figure}[!ht]
		\centering
		\includegraphics[scale=0.5]{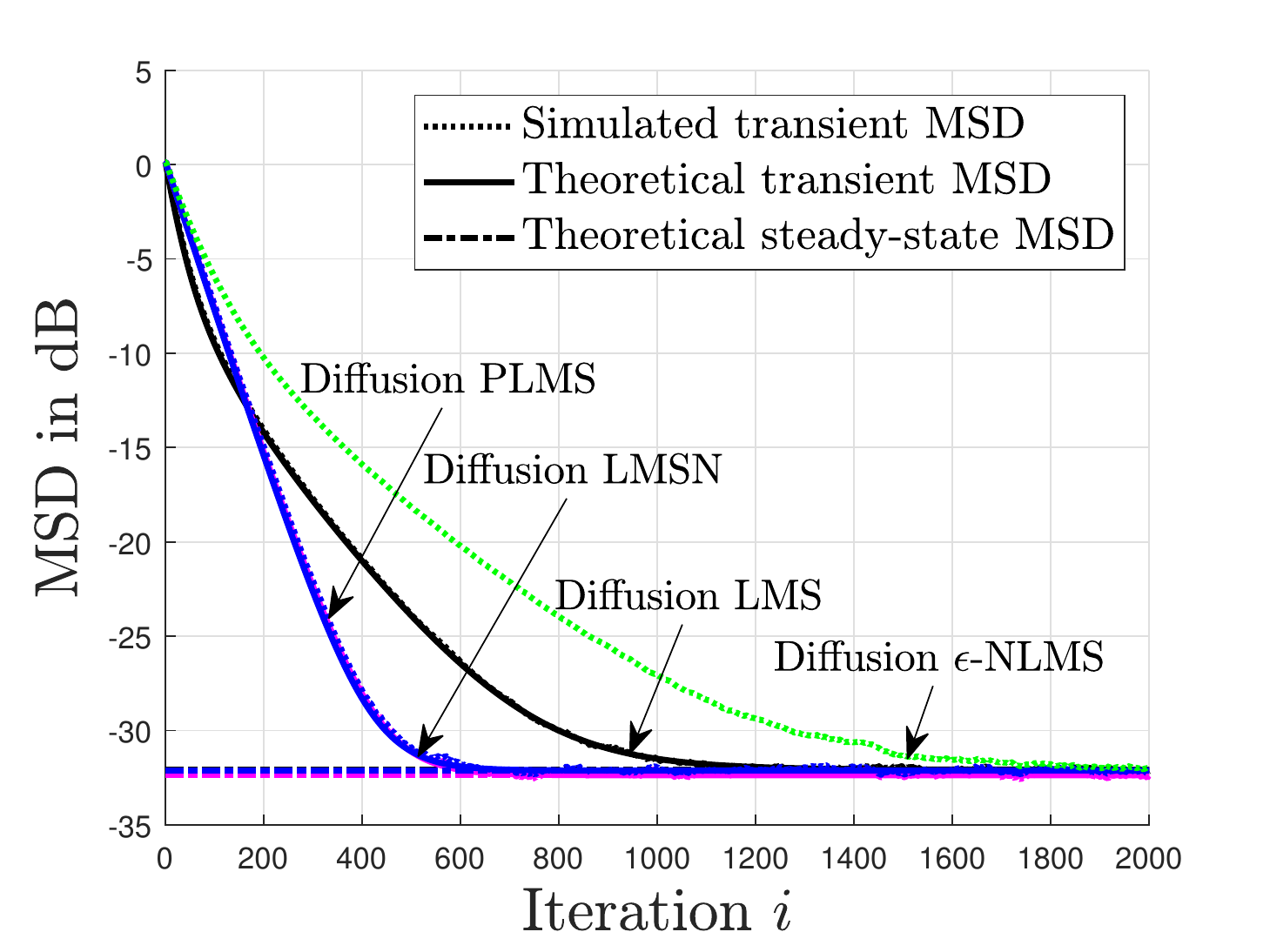}
		\caption{Network MSD performance with the Erd\H{o}s-R\'{e}nyi graph.}
		\label{fig_1}
	\end{figure}
	\begin{figure*}[!ht]
		\centering
		\subfigure[Normalized Adjacency Matrix]{
			\begin{minipage}[t]{0.3\linewidth}
				\centering
				\includegraphics[scale=0.4]{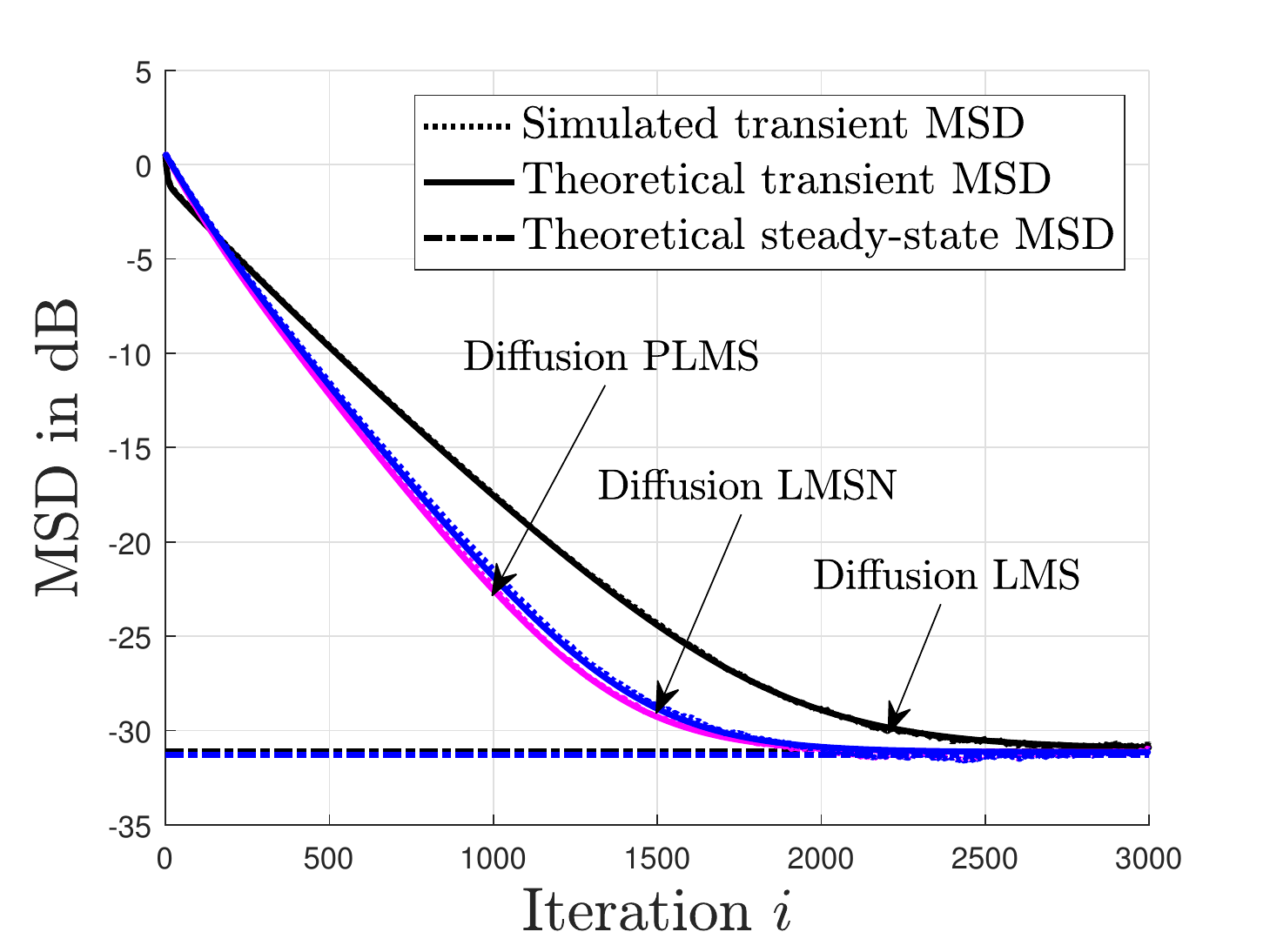}
			\end{minipage}%
		}%
		\subfigure[Normalized Laplacian Matrix]{
			\begin{minipage}[t]{0.3\linewidth}
				\centering
				\includegraphics[scale=0.4]{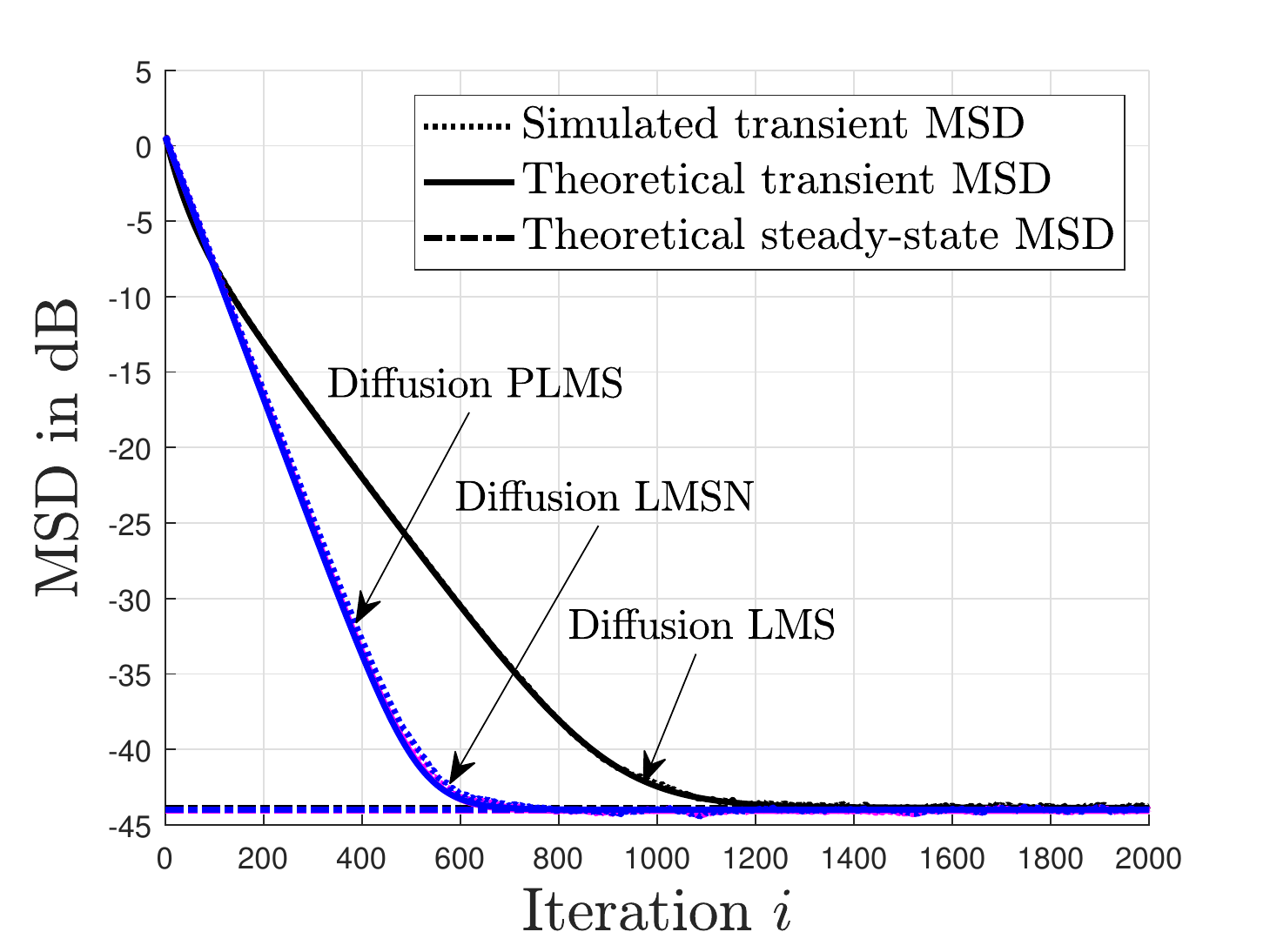}
			\end{minipage}%
		}%
		\subfigure[Adjacency Matrix]{
			\begin{minipage}[t]{0.3\linewidth}
				\centering
				\includegraphics[scale=0.4]{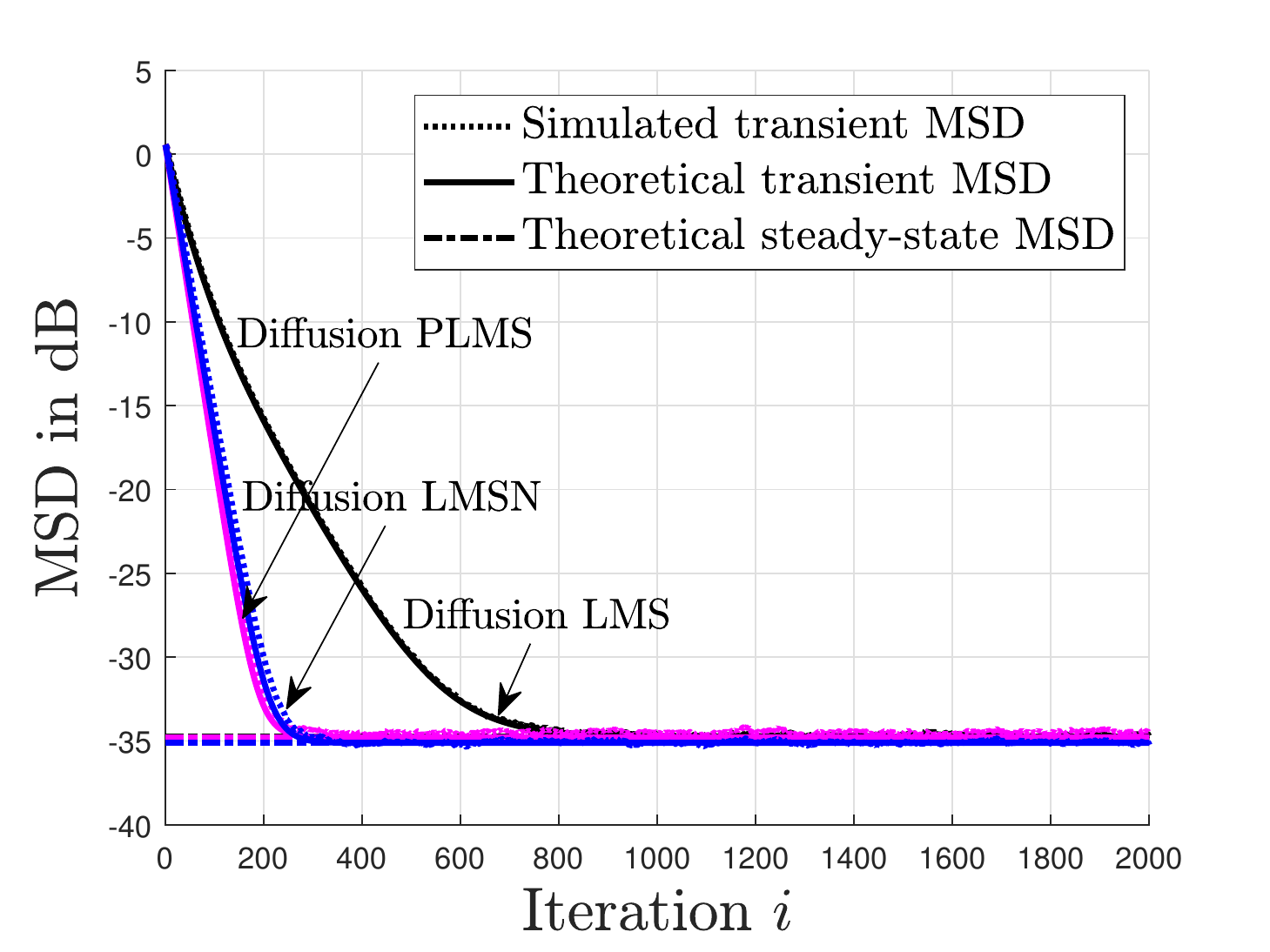}
			\end{minipage}%
		}%
		\caption{Network MSD performance for different types of shift operators with the sensor network.}
		\label{fig_2}
	\end{figure*} 
We considered this data model with an  Erd\H{o}s-R\'{e}nyi random graph and on a sensor network graph. Both  consisted of $N=60$ nodes. The Erd\H{o}s-R\'{e}nyi random graph was generated in a similar construction as in~\cite{mei2017signal}. Namely, it was obtained by generating an $N \times N$ symmetric matrix $\bS$ whose entries were governed by the Gaussian distribution $\N(0,1)$, and then thresholding edges to be between 1.2 and 1.8 in absolute value. Then, the edges were soft thresholded by 1.1 to be between 0.1 and 0.7 in magnitude. The shift matrix $\bS$ was normalized by 1.1 times its largest eigenvalue.  The  sensor network  was generated by using GSPBOX~\cite{perraudin2014gspbox}. Each node was connected to its 5 nearest neighbors.  The shift  matrix was the normalized adjacency matrix, that is,  $\bS=\frac{\bW}{1.1  \lambda_{\max}(\bW)}$. In this case, all the eigenvalues of $\bS$ are {smaller} than $1$ and the energy of the shifted signal $\bS^m\bx$ diminishes for large $m$. The smallest eigenvalue $\lambda_{\min}(\bR_{z,k})$ was very small, and, for some node, it was close to $0$. 

With this setting, we compared the diffusion LMS algorithm~\eqref{eq:DLMS}, the diffusion LMS-Newton (LMSN) algorithm~\eqref{eq:DLMSN}, and the diffusion preconditioned LMS (PLMS) algorithm~\eqref{eq:DPLMS}. Simulated results were averaged over 500 Monte-Carlo runs. For the LMSN and PLMS algorithms, we set the regularization parameter as $\epsilon=0.01$. We ran algorithms~\eqref{eq:DLMS},~\eqref{eq:DLMSN} and~\eqref{eq:DPLMS}  by setting  $a_{\ell,k}=\frac{1}{|\N_k|}$ for $\ell \in \N_k$.  We used a uniform step-size for all nodes, i.e., $\mu_k=\mu$ for all $k$. We also considered the $\epsilon$-normalized LMS ($\epsilon$-NLMS) method for comparison purposes. In this case, the adaptation step~\eqref{eq:DPLMSa} is substituted by:
    \begin{equation}
        \bpsi_k(i+1)=\boldh_k(i) +	\frac{\mu_k}{\|\bz_k(i)\|^2+\epsilon} \bz_k(i) e_k(i).
    \end{equation}
With the Erd\H{o}s-R\'{e}nyi graph, we compared the performance of the LMS, PLMS,  LMSN and $\epsilon$-NLMS algorithms. We set $\mu=\{0.08, 0.008, 0.01, 0.05\}$, respectively. The network MSD performance of each algorithm is reported in Fig.~\ref{fig_1}. The theoretical transient and steady-state MSD are also reported. With the sensor network graph, we compared the performance of the LMS, PLMS and LMSN algorithms. We set $\mu=\{0.08, 0.005, 0.0055\}$, respectively. The performance of each algorithm is reported in Fig.~\ref{fig_2}(a). In~Fig.~\ref{fig_1}, we observe that the diffusion $\epsilon$-NLMS converged slower than all other algorithms. Observe that the diffusion LMSN and PLMS algorithms converged faster than the LMS algorithm for both graphs, and the diffusion PLMS performed similarly compared with the LMSN in terms of convergence rate. Also, note that the theoretical results match well the simulated curves.
	 
In a second experiment, we considered the normalized graph Laplacian matrix   $\bS= \bD^{-\frac{1}{2}}\bL\bD^{-\frac{1}{2}}$, and the adjacency matrix {$\bW$},  as  shift operators.  For the normalized graph Laplacian, $\lambda_{\max}(\bR_{z,k})$ was large for all nodes. Therefore, for the diffusion LMS algorithm, the step-size was chosen relatively small to guarantee  convergence. We set $\mu=\{0.004,0.01,0.008\}$ for the LMS, LMSN   and PLMS. The results are reported in Fig.~\ref{fig_2}(b). For the adjacency matrix,   we used  uniform step-sizes $\mu=\{0.02,0.018\}$ for the LMSN   and PLMS, respectively. The step-size was set to $\mu_k=0.05 \cdot\frac{2}{\lambda_{\max}(\bR_{z,k})}$ for each node $k$ for the LMS update in order to achieve the same steady-state MSD. The results are reported in Fig.~\ref{fig_2}(c).  We observe in Fig.~\ref{fig_2} that the diffusion LMSN and PLMS algorithms converged faster than the LMS algorithm with the three graph shift operators. The PLMS algorithm achieved the same performance as the LMSN algorithm with a lower computational complexity.
	
\subsection{Experiment with correlated input data}
We tested the algorithms over the sensor network graph with correlated graph signals. We first considered a zero-mean i.i.d. Gaussian graph signal driven by a non-diagonal covariance matrix $\bR_x$. This means that the input data were correlated over the vertex domain, but uncorrelated over time. Matrix $\bR_x$ was generated as  $\bR_x=\bV\diag \{\sigma_{x,k}^2\}_{k=1}^N\bV^\top$, with  $\sigma_{x,k}^2$  randomly chosen from the uniform distribution $\mathcal{U}(1,1.5)$ and $\bV$ is the  graph Fourier transform matrix. The graph shift operator was defined by the normalized adjacency matrix. The filter degree was set as $M=3$. We observe in Fig.~\ref{fig_3} that the proposed diffusion PLMS algorithm performed as well as  the LMSN algorithm, and converged faster than the diffusion LMS algorithm. 
	\begin{figure}[!ht]
		\centering
		\includegraphics[scale=0.5]{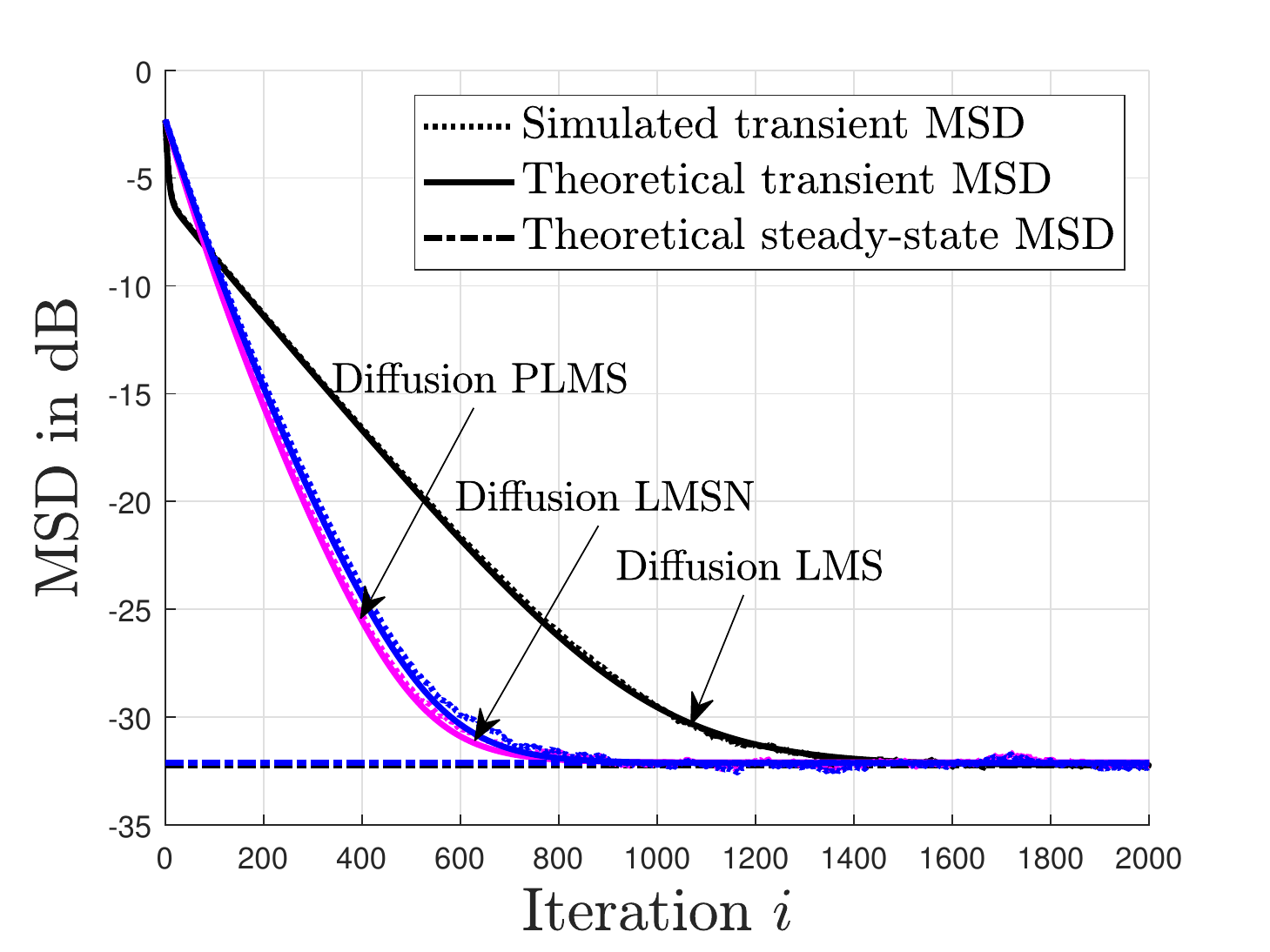}
		\caption{Network MSD performance with a vertex domain correlated input signal.}
		\label{fig_3}
	\end{figure}

Next, we considered a graph signal $\bx(i)$ that was correlated over both vertex and time domains. We assumed that  $\bx(i)$ is a Gaussian process with zero mean and covariance matrix $\bR_x$ satisfying the discrete Lyapunov equation: 
	\begin{equation}
	\bS \bR_x \bS^\top -\bR_x + \bI = 0. \label{eq:lyap}
	\end{equation}
Graph signal sample $\bx(i)$ was related to $\bx(i-1)$ as follows:
	\begin{equation}
	\bx(i) = \bS \bx(i-1)+\bw(i)
	\end{equation}
with $\bS$ the normalized adjacency matrix and $\bw(i)$ a zero-mean i.i.d.  Gaussian noise with covariance $\bI_N$. It can be checked that $\bx(i)$ is wide-sense stationary with $\expec \{\bx(i)\} = 0$, and $\bR_x(\tau) = \bS^\tau \bR_x(0)$ for all $\tau >0$, where $\bR_x(0)$ satisfies the Lyapunov equation~\eqref{eq:lyap}. The graph filter order was set as $M=3$. The step-sizes were set to $\mu=\{0.1,0.038,0.03 \}$ for the LMS,  LMSN and PLMS, respectively. The regularization parameter $\epsilon$  was set to $\epsilon=0.1$. Fig.~\ref{fig_4} depicts the simulated and  theoretical MSD performance.  We observe that, due to correlation over time, the diffusion LMSN method converged faster than the PLMS.  However,  the proposed PLMS algorithm still performed better than the diffusion LMS method. 	\begin{figure}[!ht]
		\centering
		\includegraphics[scale=0.5]{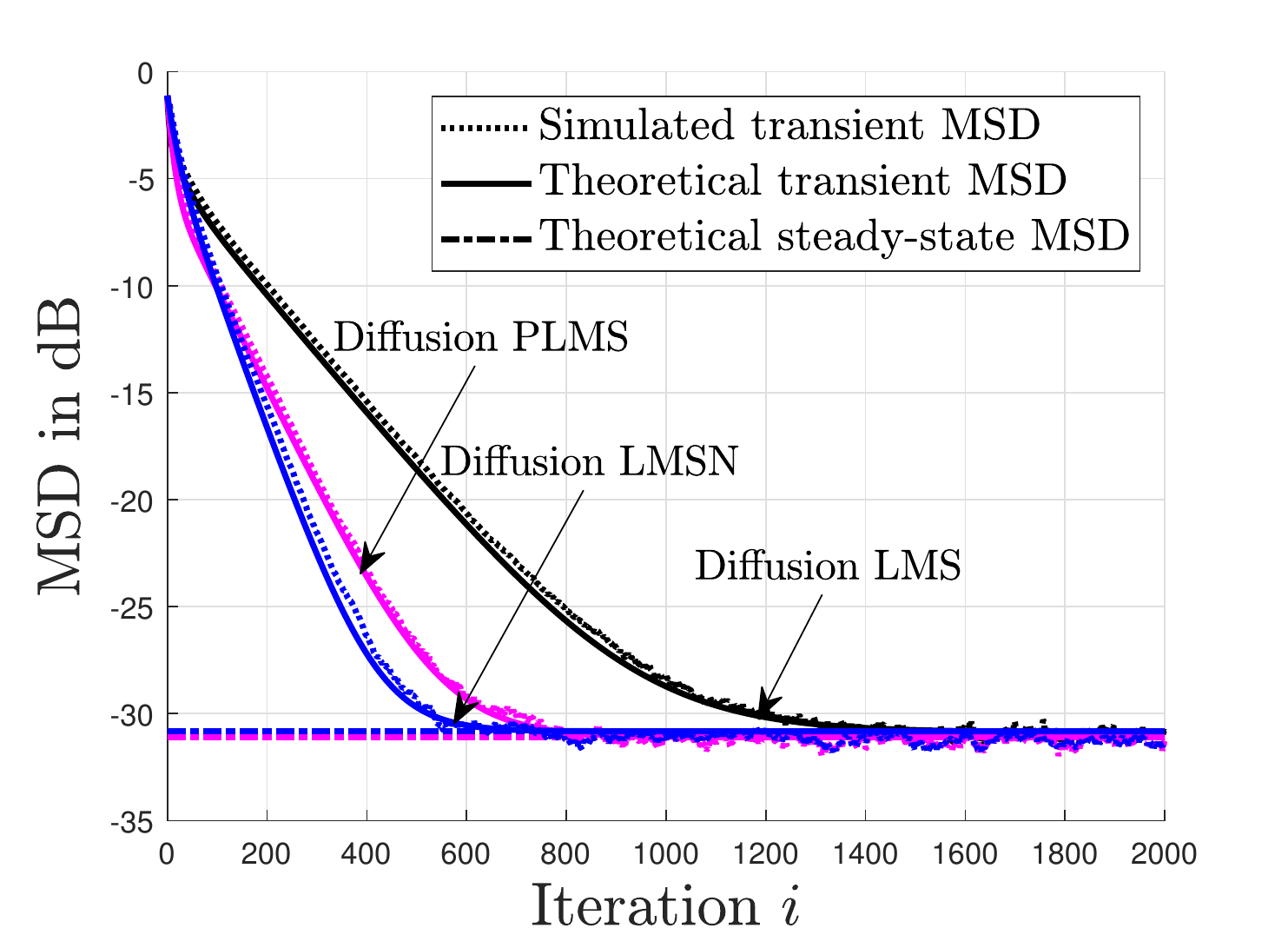}
		\caption{Network MSD performance with input graph signal correlated over both vertex and time domains.}
		\label{fig_4}
	\end{figure}
\subsection{Clustering method for node-varying graph filter}
	
Finally, we considered a scenario where nodes do not share the same filter coefficients. We assumed the linear data model~\eqref{eq:datamodel}. The graph shift operator was defined by the normalized adjacency matrix, and the graph filter order was set as $M=3$. The nodes were  grouped into  three clusters: $\C_1=\{1,\ldots,20\}$, $\C_2=\{21,\ldots,40\}$, and $\C_3=\{41,\ldots,60\}$. The optimal graph filter coefficients $\boldh^o_k$ were set according to $[0.5 \; 0.4 \; 0.9 ]^\top$ if $k\in\C_1$, $[0.3\; 0.1 \;0.4 ]^\top$ if $k\in\C_2$, and $[0.9\; 0.3 \; 0.7]^\top$ if $k\in\C_3$.  We considered for comparison purpose the PLMS algorithm with clustering mechanism \eqref{eq:inertia}-\eqref{eq:Ei} , with basic clustering mechanism \eqref{eq:boolean} and $M_k=M$ for all $k$, the oracle PLMS algorithm where the clusters are assumed to be known a priori, the PLMS algorithm without clustering mechanism, and the non-cooperative algorithm where  $a_{\ell k}=1$ if $k=\ell$ and zero otherwise.  All algorithms used the adaptation step~\eqref{eq:DPLMSa} with the same step-size $\mu_k=0.01$ for all $k$. Parameters $\{\tau,\beta,\theta,\nu \}$ were set to $\{0.9, 0.01, 0.5, 0.98\}$, respectively. As shown in~Fig.~\ref{fig_5}, the non-cooperative method did not achieve acceptable MSD level. The main reason is that, with the normalized adjacency matrix as graph shift operator $\bS$, the entries of $\bz_k(i)$ in~\eqref{eq: z_k(i)} corresponding to higher powers of $\bS$ are significantly diminished, resulting in poor estimation performance of filter coefficients when nodes cannot cooperate. The PLMS algorithm without clustering mechanism did not achieve good performance too because it has been designed to converge toward a consensual solution $\boldh^o$, which does not make sense for this scenario. The PMLS with clustering mechanism (71) and $M_k = M $ achieved slightly improved performance because the estimation of higher-order coefficients in $\boldh_k$ was not reliable enough, leading to incorrect clustering. The proposed PLMS algorithm with clustering mechanism \eqref{eq:inertia}-\eqref{eq:Ei} performed as well as the oracle algorithm. Fig.~\ref{fig_6}~(a) shows the topology of the graph given by the adjacency matrix $\bA$ (and the shift matrix $\bS$). Fig.~\ref{fig_6}~(b) presents the clusters inferred by the proposed method. These clusters perfectly match the ground truth clusters $\C_1$ to $\C_3$.
	\begin{figure}[!ht]
		\centering
		\includegraphics[scale=0.3]{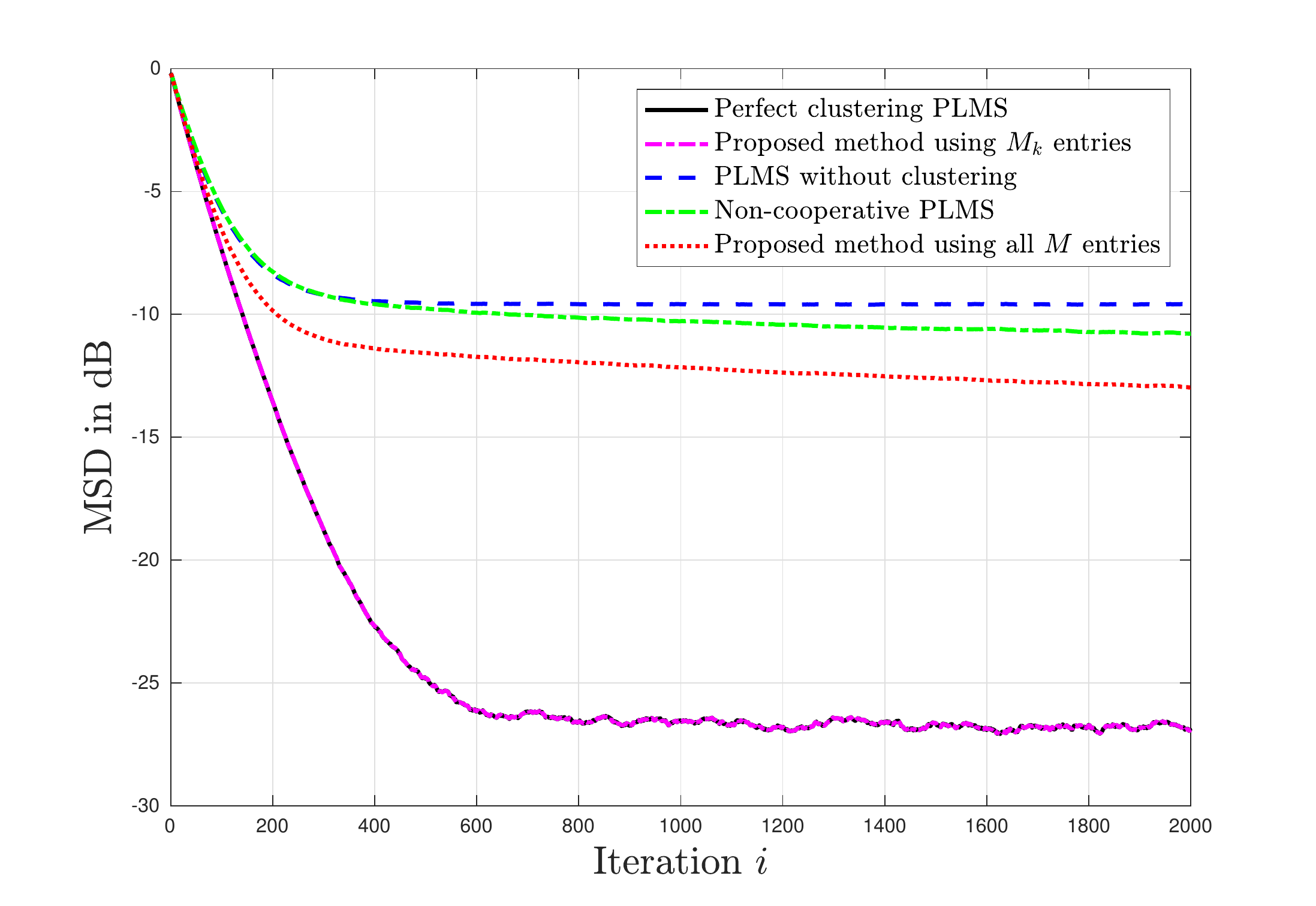}
		\caption{Network MSD performance for different clustering algorithms. }
		\label{fig_5}
	\end{figure}

	\begin{figure}[!ht]
		\centering
		\subfigure[Adjacency Matrix]{
			\begin{minipage}[t]{0.45\linewidth}
				\centering
				\includegraphics[scale=0.2]{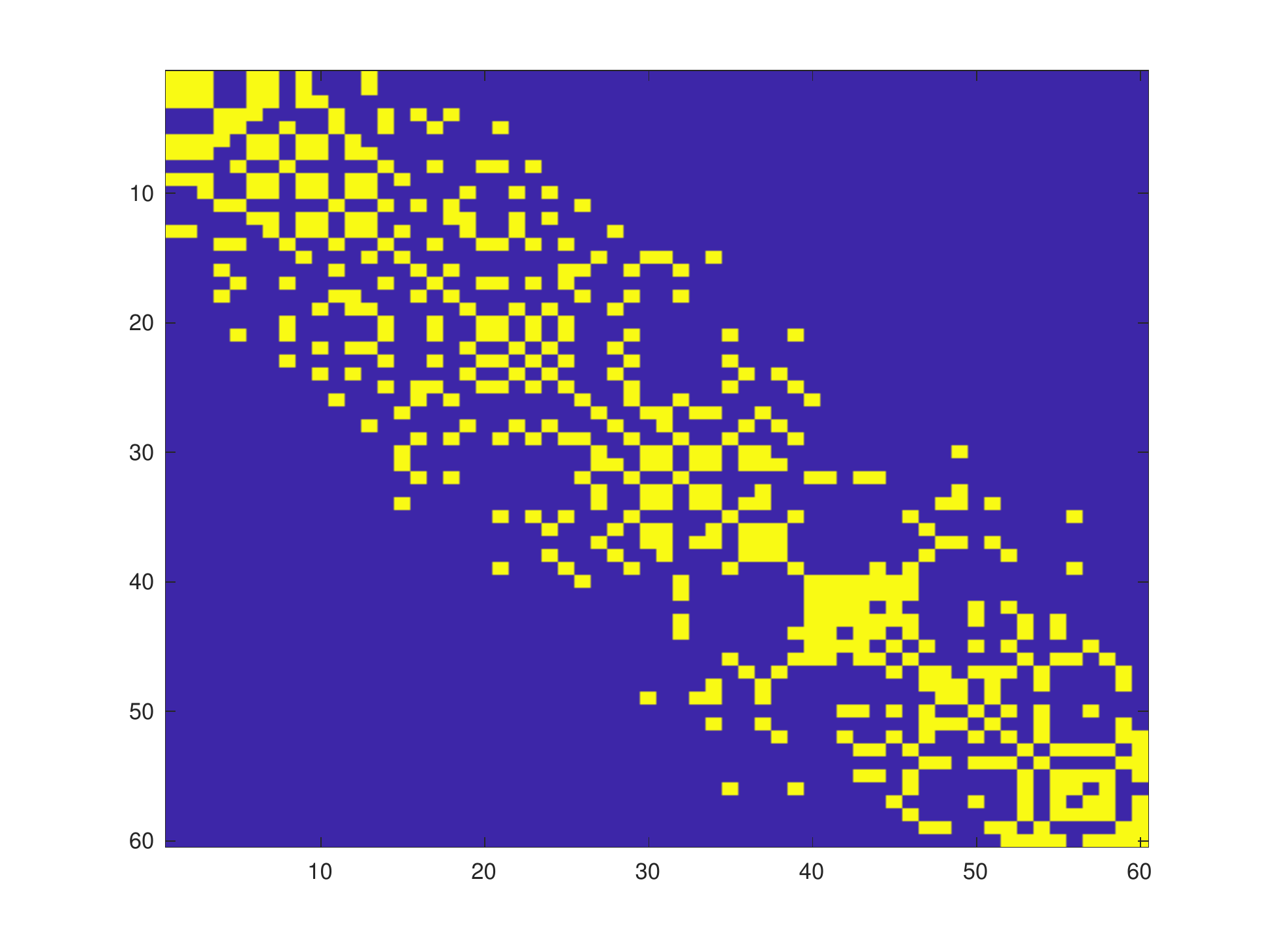}
			\end{minipage}%
		}%
		\subfigure[Inferred clusters]{
		\begin{minipage}[t]{0.45\linewidth}
			\centering
			\includegraphics[scale=0.2]{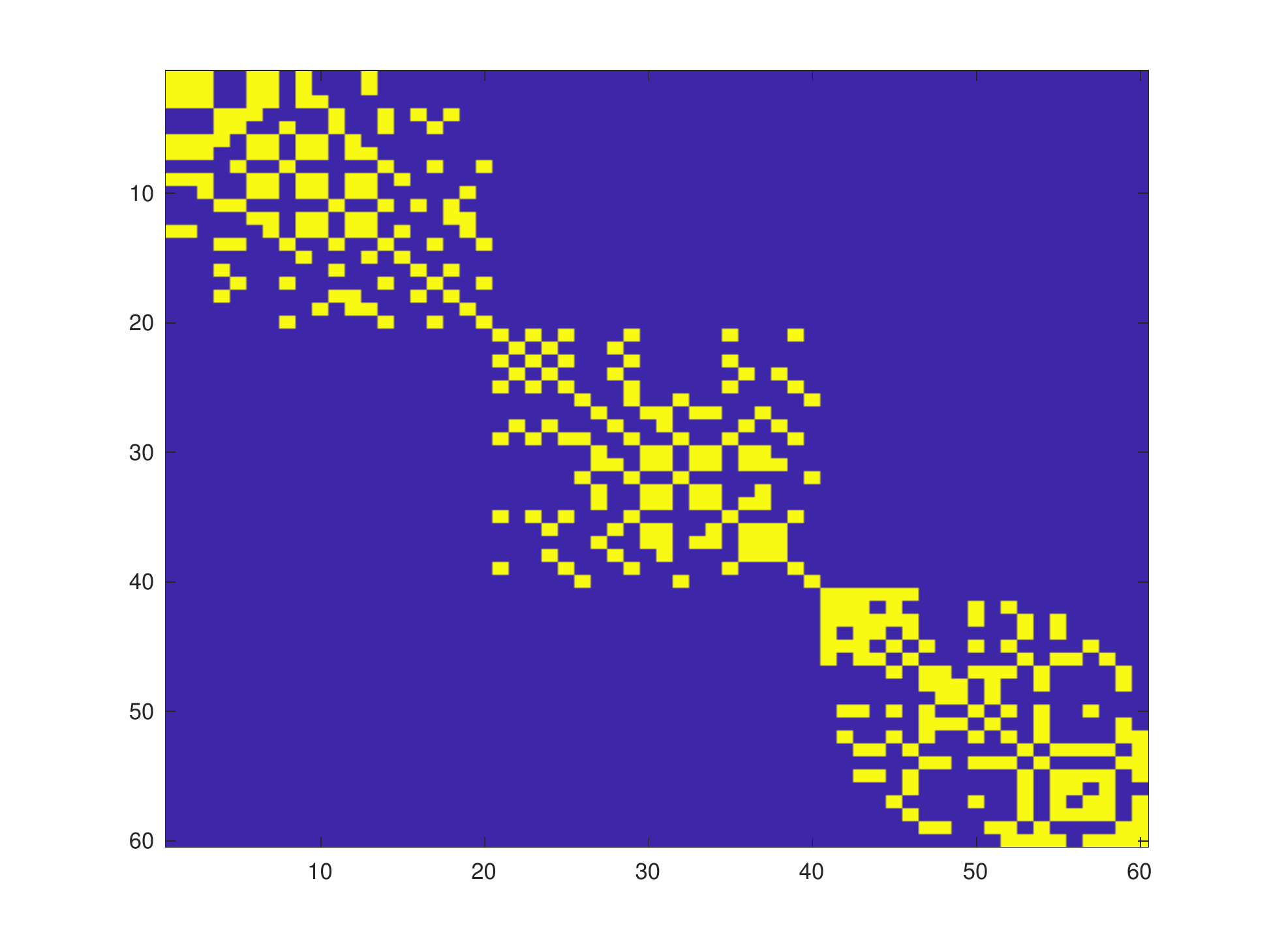}
		\end{minipage}%
		}%
		\caption{ Graph topology and clusters. }
		\label{fig_6}
	\end{figure}
	
Next, we considered that optimal parameter vectors $\boldh^o_k$  change over time while clusters remain unchanged. Nodes were grouped into two clusters $\C_1=\{1,\ldots,30\}$ and $\C_2=\{31,\ldots,60\}$. The optimal parameter vectors changed for both clusters at time instant $i=1000$. Simulation results in Fig.~\ref{fig_7} show  that the proposed clustering method was able to track well this change. Finally, we considered the scenario where clusters and models change simultaneously. At Stage 1, the nodes were grouped into two clusters defined as $\C_1=\{1,\ldots,30\}$ and $\C_2=\{31,\ldots,60\}$. Stage 2 started at time instant $i=1000$ with three clusters  $\C_1=\{1,\ldots,20\}$, $\C_2=\{21,\ldots,40\}$, and $\C_3=\{41,\ldots,60\}$. Stage 3 started at time instant $i=2000$ with two clusters $\C_1=\{1,\ldots,25\}$, $\C_2=\{26,\ldots,60\}$. At each stage, the optimal parameter vectors $\boldh_k^o$ changed accordingly. Ground truth clusters for the three stages are depicted in Fig.~\ref{fig_9}~(Top). Parameters $\{ \mu, \tau, \beta, \theta, \nu \}$ were set to $\{0.01, 0.9, 0.01, 0.5, 0.4\}$, respectively. Fig.~\ref{fig_8} shows the simulated transient MSD of the proposed PLMS algorithm with clustering mechanism. It is compared with the oracle PLMS algorithm where the clusters are assumed to be known a priori. Fig.~\ref{fig_9}~(Bottom) depicts the inferred clusters at $i=1000,2000,3000$ during one Monte Carlo run. The proposed algorithm was able to track changes in both clusters and models. 
	\begin{figure}[!ht]
		\centering
		\includegraphics[scale=0.3]{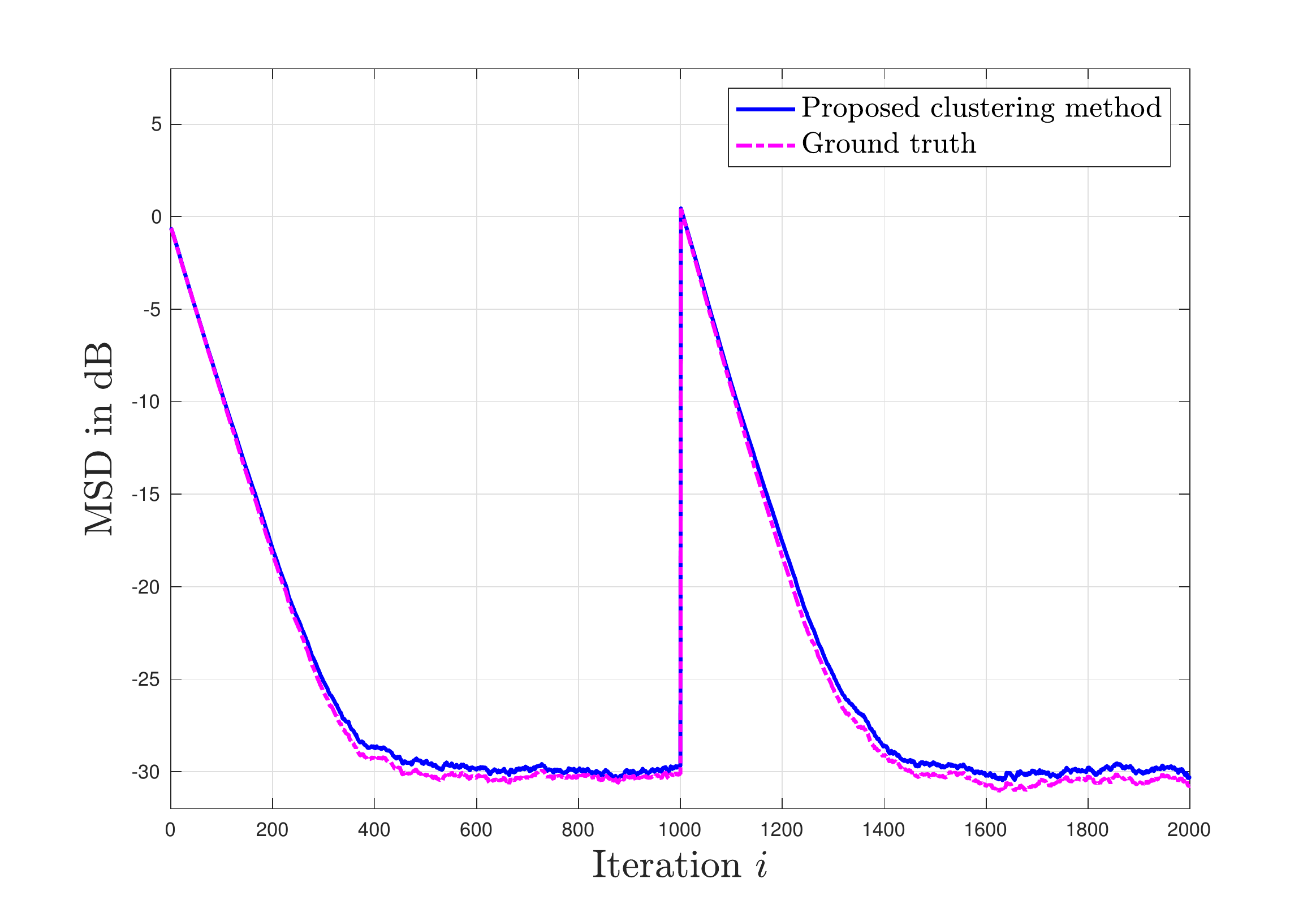}
		\caption{Network MSD performance with model change.}
		\label{fig_7}
	\end{figure}
	\begin{figure}[!ht]
		\centering
		\includegraphics[scale=0.3]{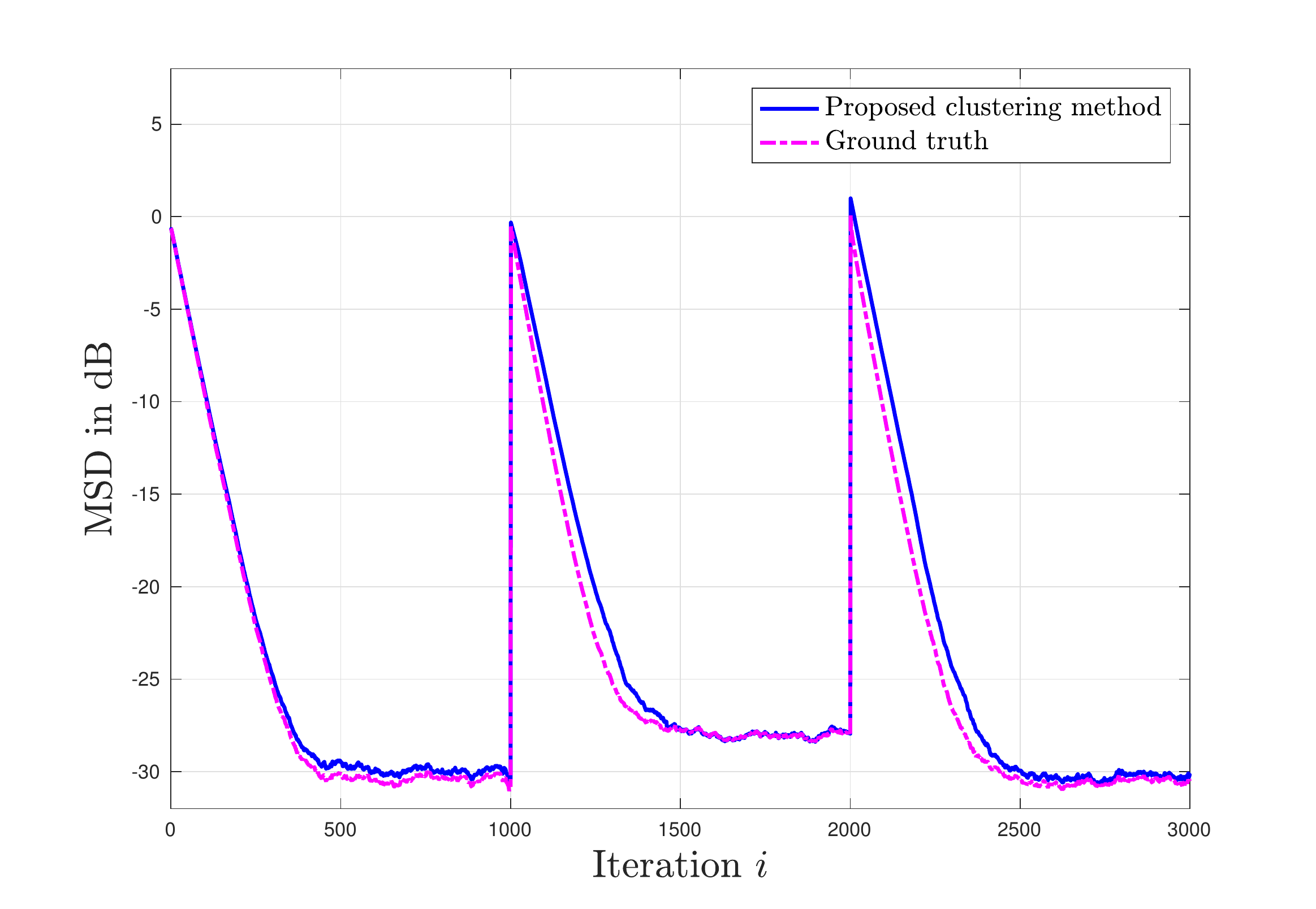}
		\caption{Network MSD performance with model  and clusters change.}
		\label{fig_8}
	\end{figure}
	\begin{figure}[!ht]
		\centering
		\includegraphics[scale=0.32]{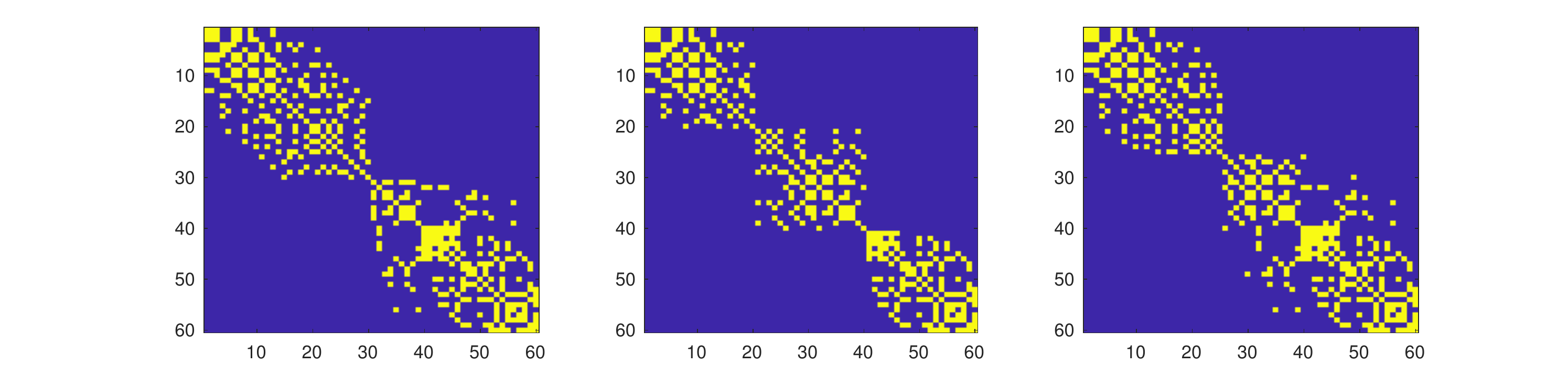}\\
		\includegraphics[scale=0.32]{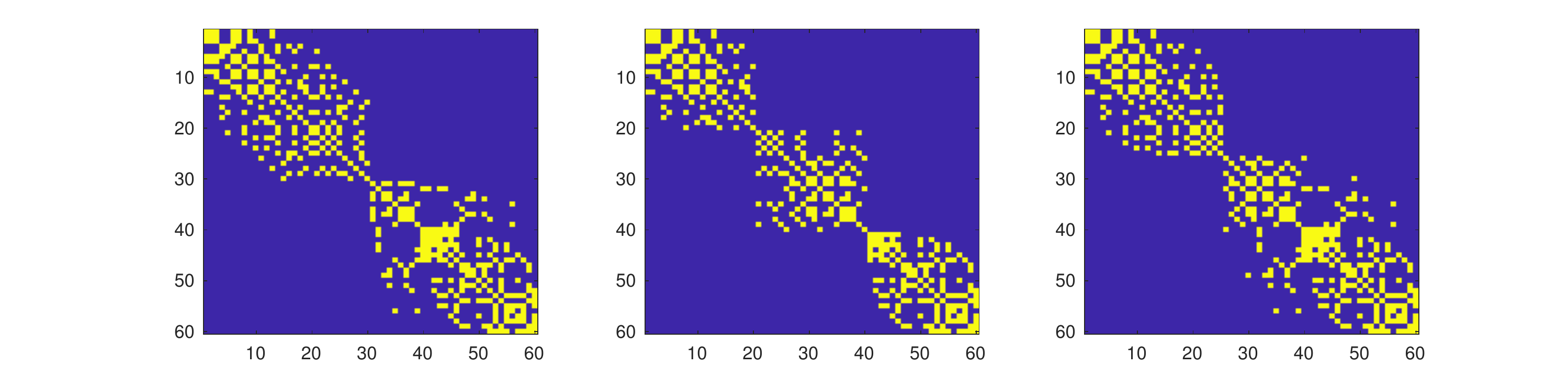}
		\caption{\textit{(Top)} Ground truth cluster. \textit{(Bottom)} Inferred clusters at steady-state of a single Monte Carlo run. From left to right: Stage~1, Stage~2, Stage~3. }
		\label{fig_9}
	\end{figure}

\subsection{Reconstruction on U.S. temperature dataset}
We considered a dataset that collects hourly temperature measurements at $N = 109$ stations for  $T =8759$ hours across the United States in 2010~\cite{noaa}. An undirected graph, illustrated in Fig.~\ref{fig_10}, was constructed according to the nodes coordinates by using the $k$-NN approach ($k=7$) of GSPBOX. 
\begin{figure}[!ht]
	\centering
	\subfigure[]{
		\begin{minipage}[t]{0.45\linewidth}
			\centering
			\includegraphics[scale=0.3]{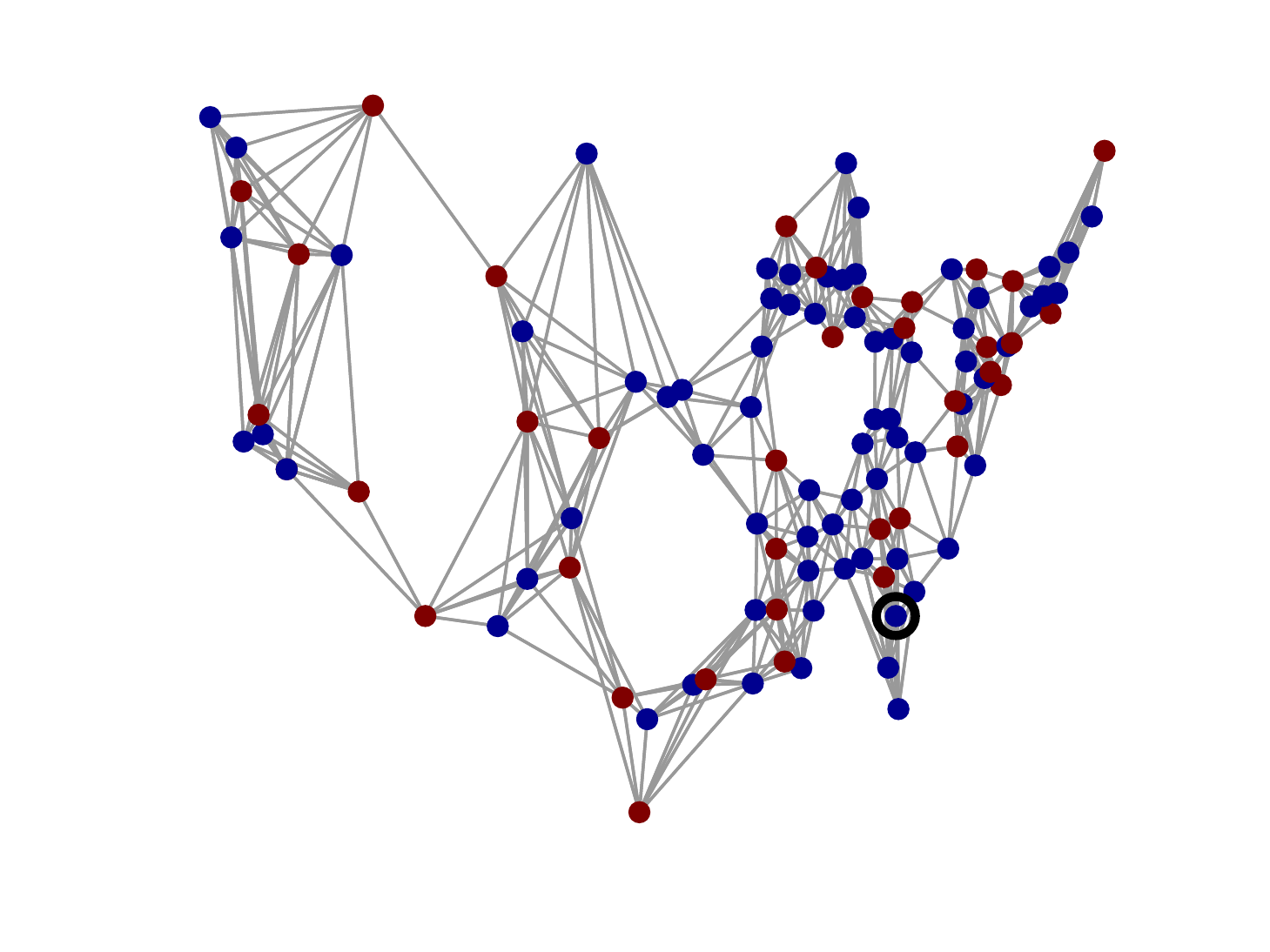}
		\end{minipage}%
	}%
	\subfigure[]{
		\begin{minipage}[t]{0.45\linewidth}
			\centering
			\includegraphics[scale=0.3]{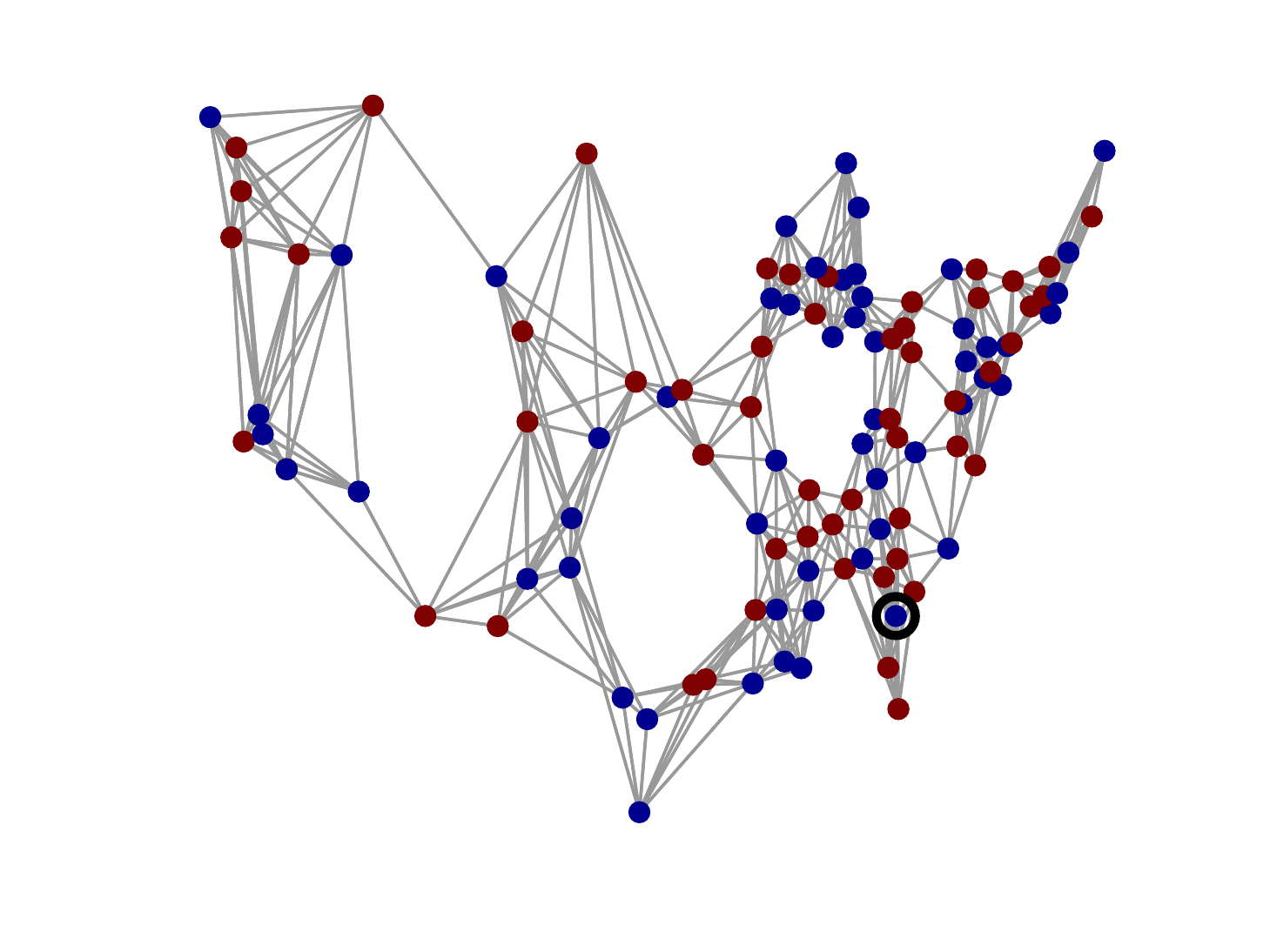}
		\end{minipage}%
	}%
	\caption{Graph topology for the U.S. temperatures dataset. Temperatures were sampled at the red nodes in red. Data at the blue nodes were unobserved. (a)  37 sampled nodes. (b)  54 sampled nodes.}
	\label{fig_10}
\end{figure}

In the first experiment, the dataset was divided into a training set containing $T_{train}=6570$ hours data (about 75\% of total). The remaining data were assigned to the test set.  The goal of this experiment was to learn a graph filter that minimizes the reconstruction error over the training set, i.e.,
\begin{equation}
\min \sum_{i=1}^{T_{train}} \sum_{k=1}^N |y_k(i)-\sum_{m=1}^M h_{m,k} [\bS^m \bx(i-m+1)]_k  |^2, \label{eq:rd_prob}
\end{equation}
where $\by(i)$ is the ground truth temperature at time $i$, and $\bx(i)$ is the partial observation given by $\bx(i)=\diag(\One_\mathcal{S}) \by(i)$. Here $\One_\mathcal{S}$ denotes the set indicator vector, whose $k$-th entry is equal to one  if node $k$ is sampled, and zero otherwise. The sampling set, illustrated in Fig.~\ref{fig_10} (a) was fixed over time in the first experiment. The normalized adjacency matrix was set as graph shift operator. Graph filter degree was set to $M=4$. Note that if $h_{m,k}=h_m$ for all $k$, problem~\eqref{eq:rd_prob} refers to the single-task problem where all the nodes seek to find common graph filter coefficients; see model~\eqref{eq:gf}. Otherwise, problem~\eqref{eq:rd_prob} refers to the multitask problem; see model~\eqref{eq:nvgf}. We ran different models and algorithms on the training set to learn graph filter coefficients.   In Fig.~\ref{fig_11}, we provide the true temperature and the reconstructed ones obtained by the different algorithms at an unobserved node, black circled in Fig.~\ref{fig_10}, over the last 120 hours samples of the test set. For comparison purposes, reconstruction results of the Kernel Kalman Filter (KKF) and the Kernel Ridge Regression (KRR) in~\cite{romero2017kernel} are also reported in Fig.~\ref{fig_11}. We observe that the single-task diffusion LMSN and multitask diffusion LMS were not able to reconstruct the true temperature, whereas the multitask diffusion PLMS and the multitask diffusion LMSN showed a good reconstruction performance, and performed better than the KKF and KRR at the selected unobserved node. To evaluate the performance over all unobserved nodes on the test set, we considered the normalized mean square error (NMSE) defined as: 
\begin{equation}
    \textsc{NMSE}=\frac{\sum_{i=T_{train}+1}^{T} \| \diag(\One_\mathcal{\bar{S}}) \left(\by(i)-\hat{\by}(i)\right)  \|_2^2 }  {\sum_{i=T_{train}+1}^{T}\| \diag(\One_\mathcal{\bar{S}})  \by(i)   \|_2^2}
\end{equation}
where $\hat{\by}(i)$ denotes the reconstructed estimate at time $i$, $\One_\mathcal{\bar{S}}$ is the set indicator vector whose $k$-th entry is equal to one if node $k$ has not been sampled, and zero otherwise. The results are reported in Table~\ref{table:1}. We observe that the multitask diffusion PLMS performed as well as the LMSN at a lower computational cost, and both performed better than the KFF and KRR.
Finally, Figure~\ref{fig_12} reports the original topology and the clusters learned by the multitask diffusion PLMS.
\begin{figure}[!ht]
	\centering
	\includegraphics[scale=0.5]{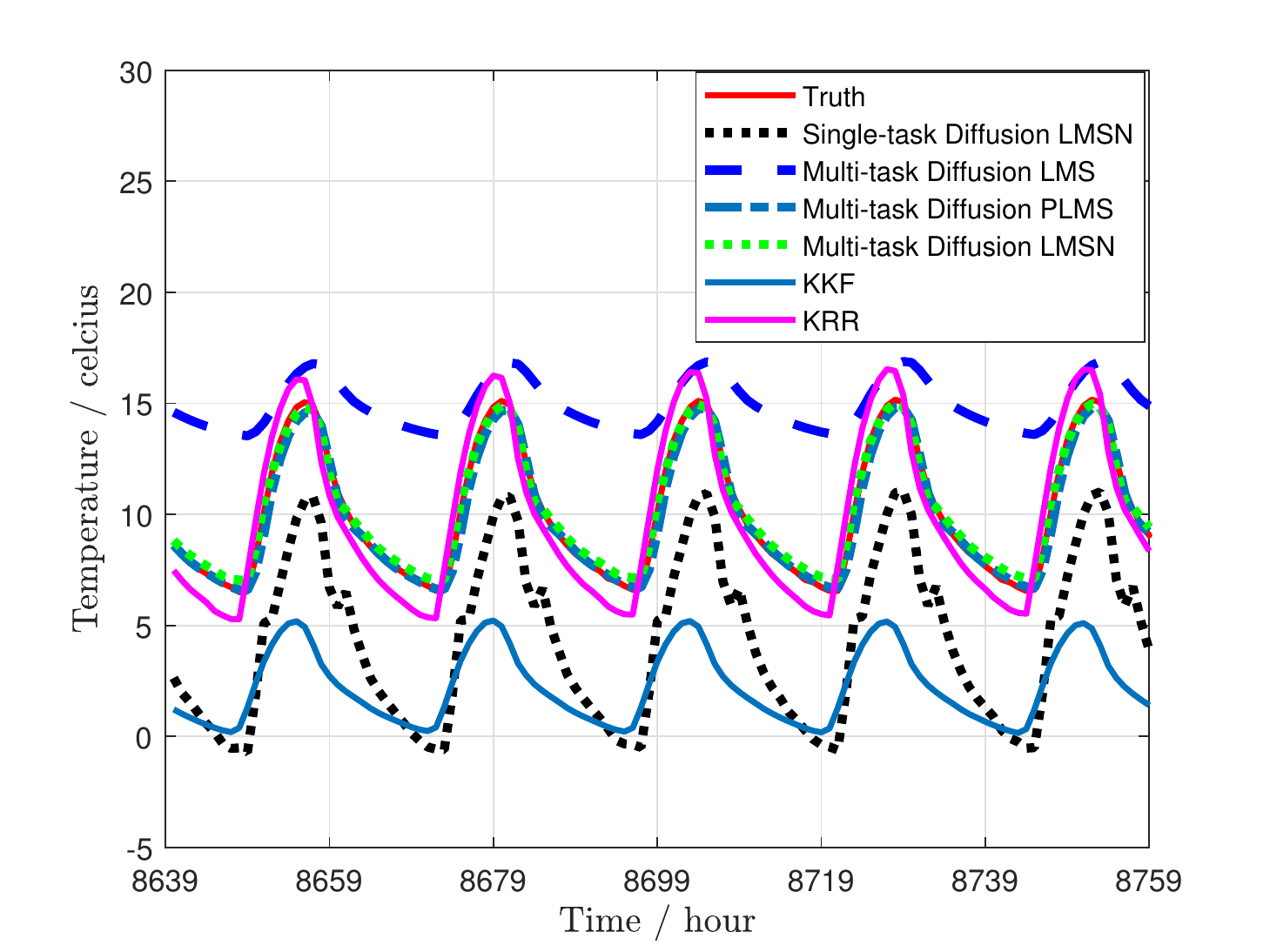}
	\caption{True temperatures and reconstructed ones at an unobserved node. $\mu_{\mathrm{LMS}}=10^{-5}$, $\mu_{\mathrm{PLMS}}=\mu_{\mathrm{LMSN}}=10^{-4}$.}
	\label{fig_11}
\end{figure} 

\begin{table}[!ht]
  \renewcommand{\arraystretch}{1.2}
  \caption{NMSE of different algorithms.}
  \label{table:1}
  \centering 
  \begin{tabular}{cc}
  \hline
  Algorithm & NMSE\\
  \hline
   KKF& $0.1093$\\
  KRR& $0.0479$\\
  Multitask diffusion LMS& $0.1152$\\
  Multitask diffusion PLMS& $0.0090$\\
  Multitask diffusion LMSN& $0.0031$\\
  \hline
  \end{tabular}  
  \end{table}

\begin{figure}[!ht]
	\centering
	\subfigure[Adjacency Matrix]{
		\begin{minipage}[t]{0.45\linewidth}
			\centering
			\includegraphics[scale=0.25]{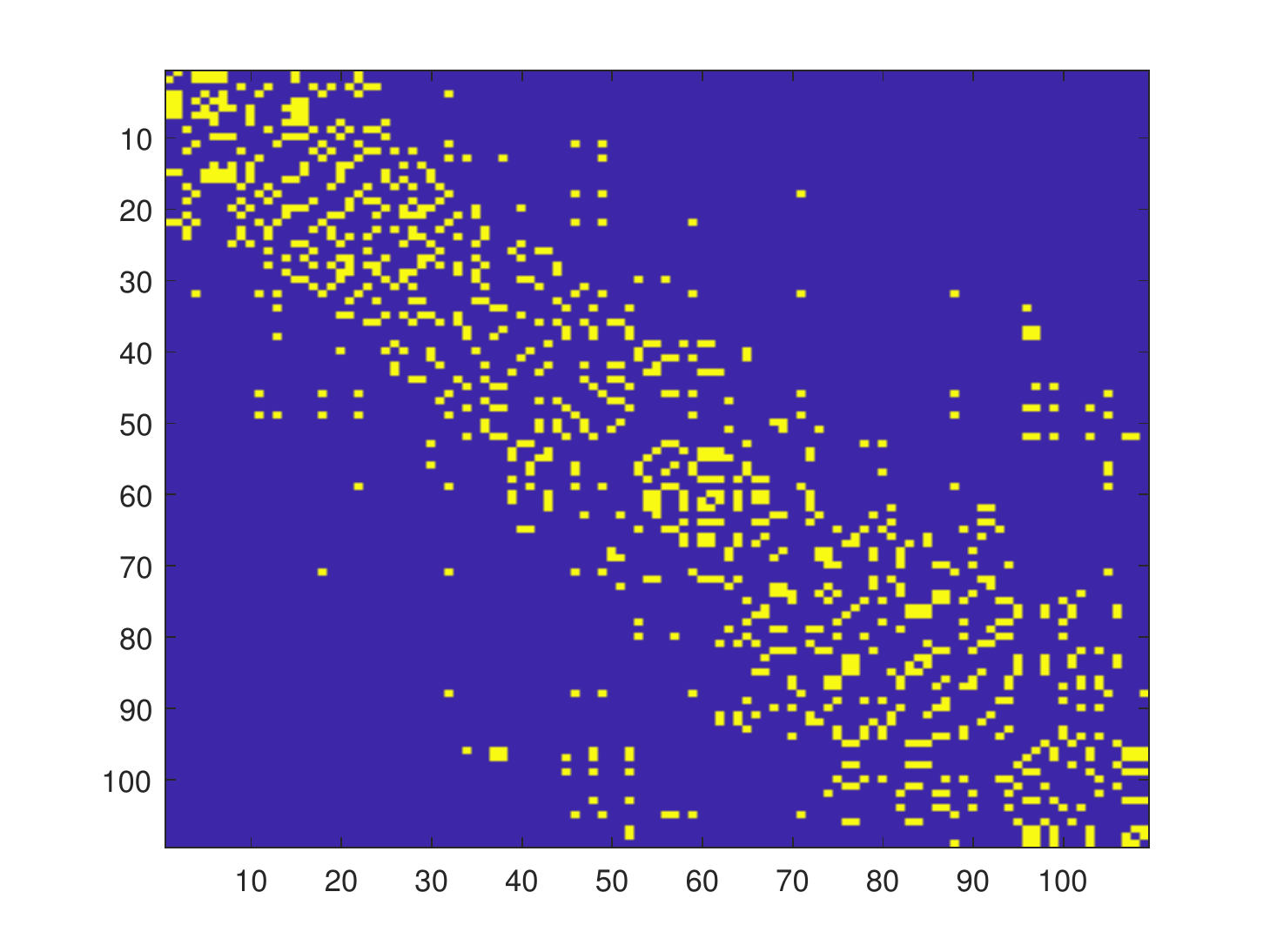}
		\end{minipage}%
	}%
	\subfigure[Inferred clusters]{
		\begin{minipage}[t]{0.45\linewidth}
			\centering
			\includegraphics[scale=0.25]{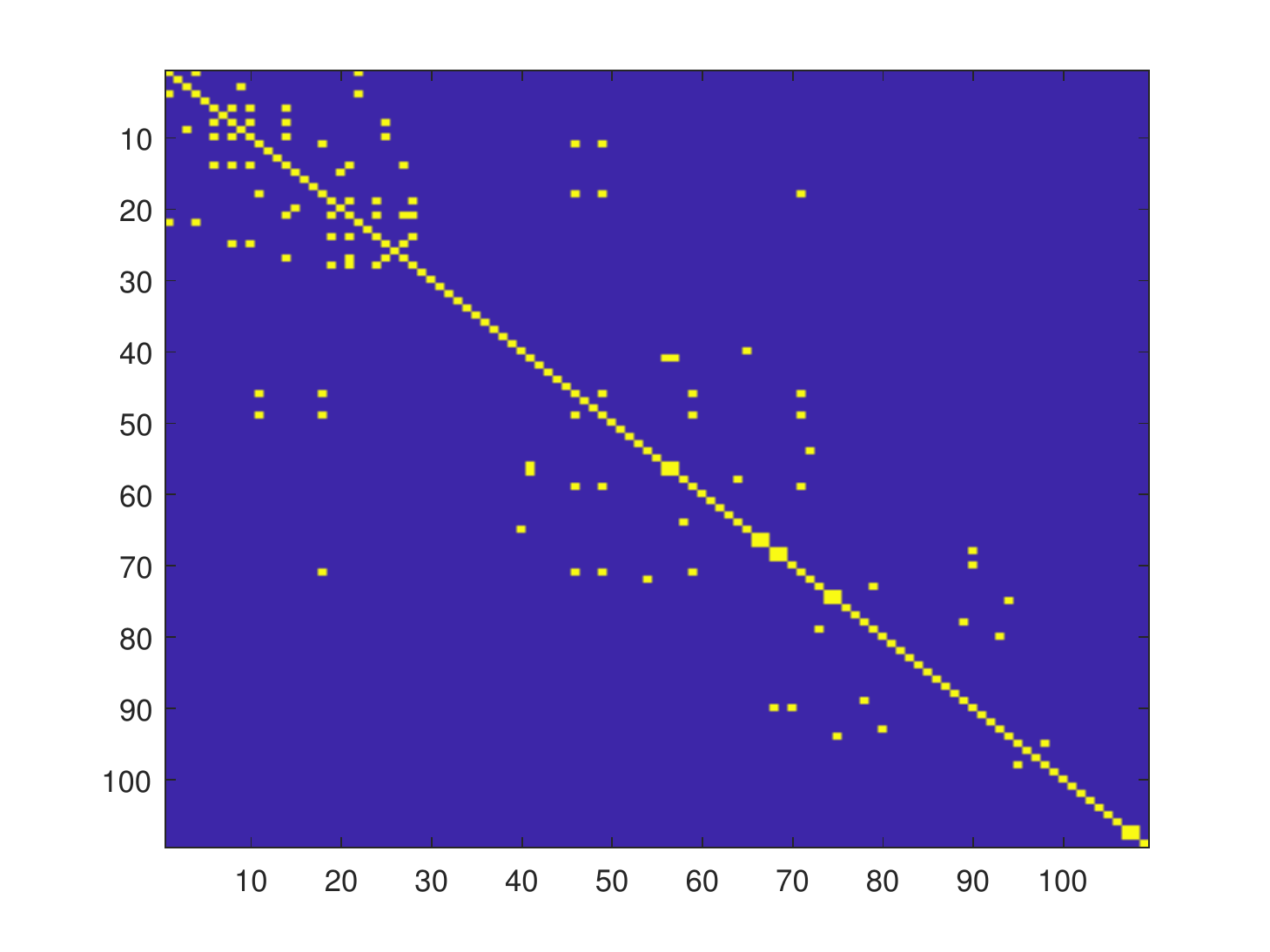}
		\end{minipage}%
	}%
	\caption{ U.S. temperature graph topology and learned clusters. }
	\label{fig_12}
\end{figure}

In the second experiment, we divided the dataset into two parts. The first part contained the first 4200 hours sampled at the nodes showed in Fig.~\ref{fig_10} (a), and the second part contained the remaining hours sampled at the nodes showed in Fig.~\ref{fig_10} (b). This means that the sampling set abruptly changed at time $t=4201$ (the black circled node was unobserved in both case). We applied the multitask diffusion PLMS method over the entire dataset. In Fig.~\ref{fig_13} (a), the reconstructed temperature at the unobserved black circled node is reported from time $t=4100$ to $t=4300$ by using the filter coefficients learned up to time $t=4000$. As expected, we can notice that the reconstruction performance was successful from $t=4100$ to $t=4200$, and dramatically deteriorated after $t=4201$. This is due to the fact that the sampling set changed, which led to a drift in the filter coefficients to estimate. Figure~\ref{fig_13} (b) depicts the reconstruction behavior over last 120 hours using the filter coefficients learned up to time $t=8600$. It can be observed that  the proposed method was able to track the drift in the filter coefficients.
\begin{figure}[!ht]
	\centering
	\subfigure[]{
		\begin{minipage}[t]{0.45\linewidth}
			\centering
			\includegraphics[scale=0.3]{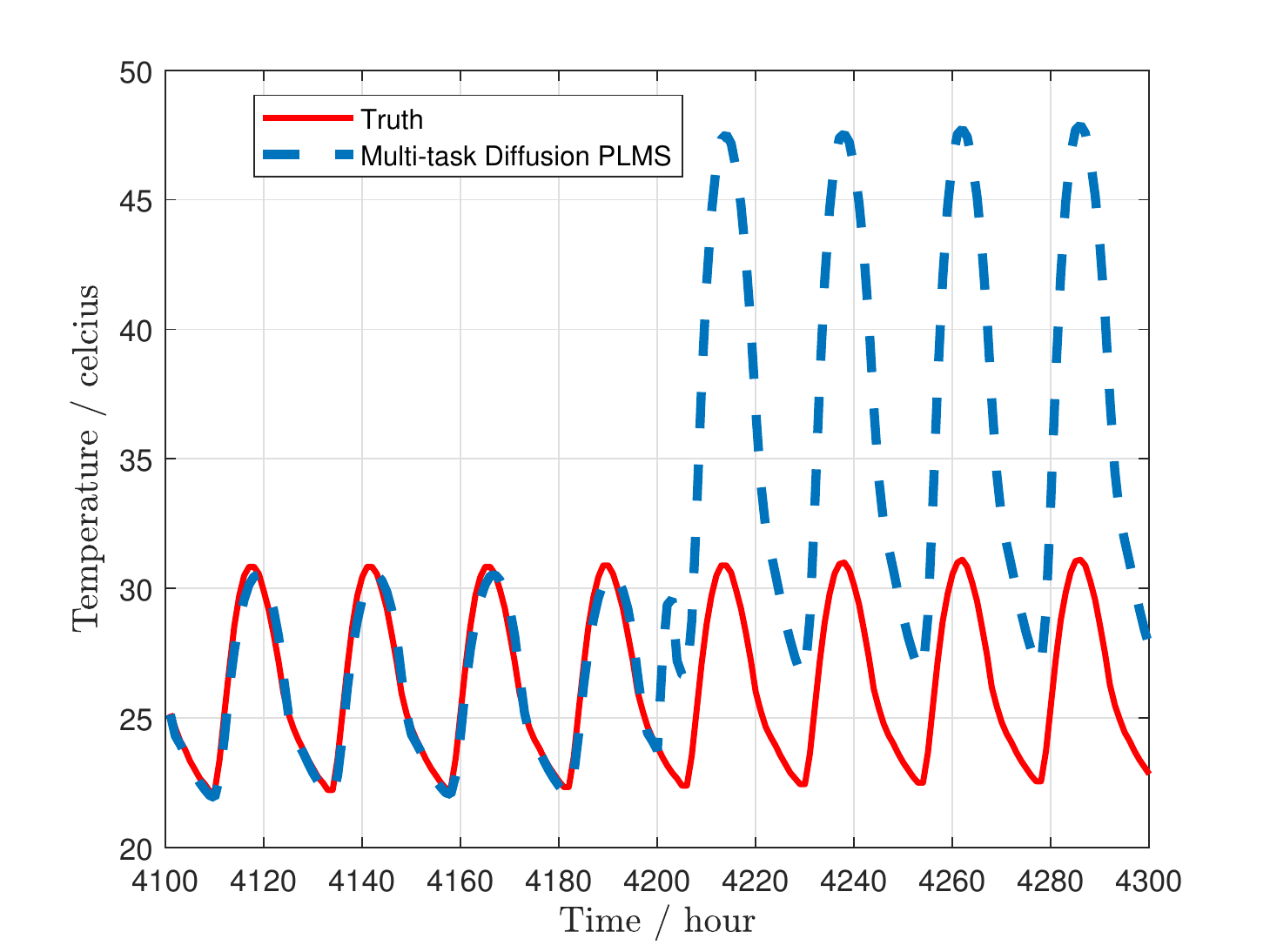}
		\end{minipage}%
	}%
	\subfigure[]{
		\begin{minipage}[t]{0.45\linewidth}
			\centering
			\includegraphics[scale=0.3]{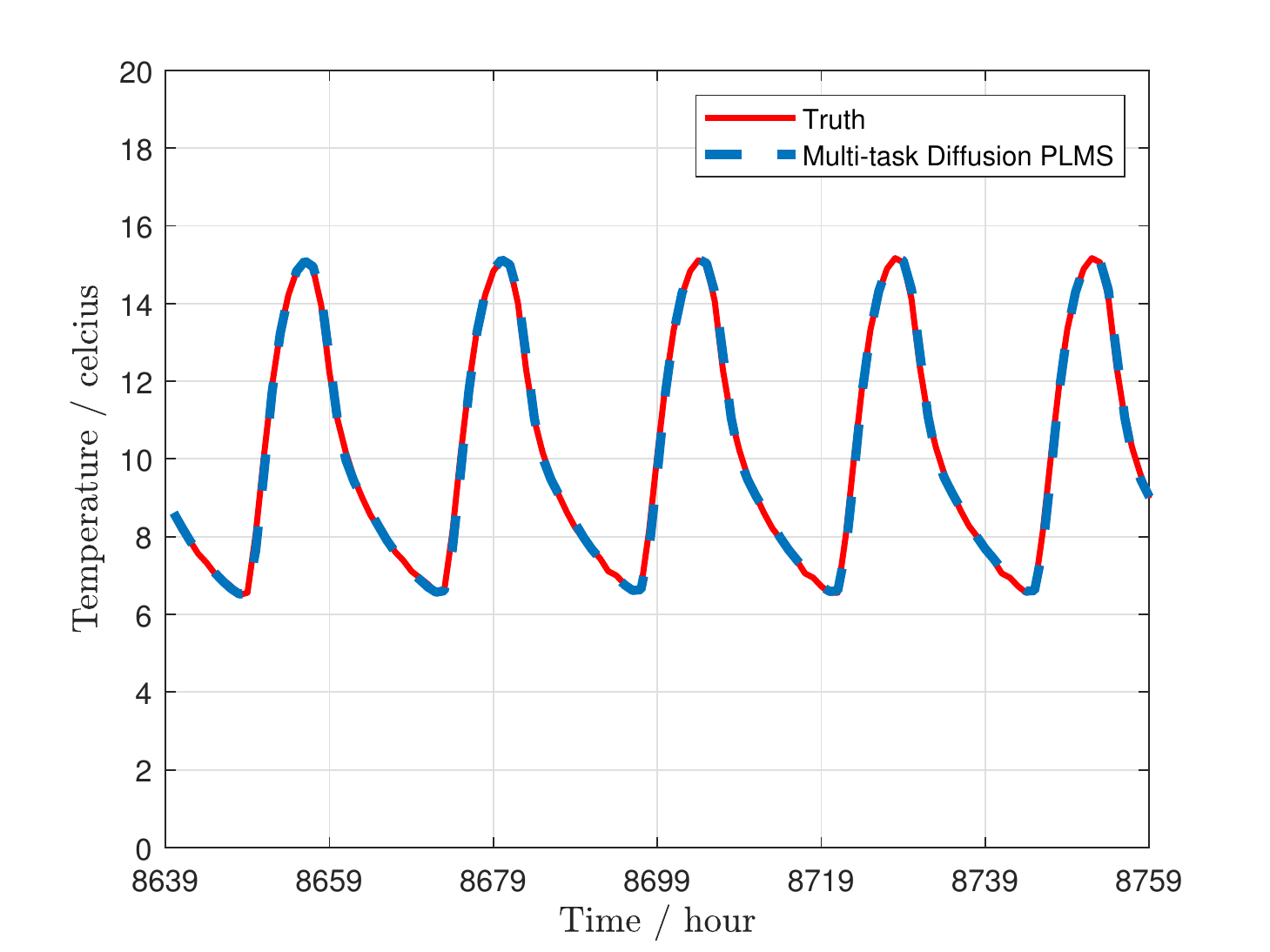}
		\end{minipage}%
	}%
	\caption{True temperatures and reconstructed ones at an unobserved node. For clarity purposes, focus on the intervals (a) $[4100, 4300]$ and (b) $[8640, 8759]$.}
	\label{fig_13}
\end{figure}

\section{Conclusion}

Diffusion LMS strategies were considered to estimate graph filter coefficients in an adaptive and distributed manner. A diffusion LMS with Newton-like descent procedure was first proposed to achieve improved convergence rate, since usual algorithms may suffer from ill conditioning effects due to the use of non-energy preserving graph shift operators. A preconditioned diffusion LMS strategy, which does not require computationally intensive matrix inversion and only uses local information, was then devised to reduce the computational burden. Its convergence behavior was analyzed in the mean and mean-square-error sense. Finally, for hybrid node-varying graph filters, a clustering mechanism to be used with the preconditioned diffusion LMS was proposed. Simulation results validated the theoretical models and showed the efficiency of the proposed algorithms. In this work, we assumed that the graph shift operator is known and time invariant. In future works, we will consider ways to estimate the graph shift operator and the filter coefficients simultaneously. Because of their improved flexibility, which allows to address a variety of non-linear identification problems, we will also focus on non-parametric methods such as \cite{koppel_decentralized_2018,pradhan2018exact} and see how they can be incorporated into GSP framework.
	
	
	\bibliographystyle{IEEEtran}
	\bibliography{IEEEabrv,ref}
	
\end{document}